\newtheorem{theorem}{Theorem}[section]
\newtheorem{lemma}{Lemma}[section]
\newtheorem{definition}{Definition}[section]
\newtheorem{assumption}{Assumption}[section]
\DeclareMathOperator{\p}{P}
\DeclareMathOperator{\md}{md}
\newcommand{\ds}{\displaystyle}
\begin{document}

\title{Tests for scale changes based on pairwise differences}
\author{Carina Gerstenberger, Daniel Vogel, and Martin Wendler}
\date{November 13, 2016}


\maketitle

\begin{abstract}
\footnotesize
In many applications it is important to know whether the amount of fluctuation in a series of observations changes over time. In this article, we investigate different tests for detecting change in the scale of mean-stationary time series. The classical approach based on the CUSUM test applied to the squared centered, is very vulnerable to outliers and impractical for heavy-tailed data, which leads us to contemplate test statistics based on alternative, less outlier-sensitive scale estimators.

It turns out that the tests based on Gini's mean difference (the average of all pairwise distances) or generalized $Q_n$ estimators (sample quantiles of all pairwise distances) are very suitable candidates. They improve upon the classical test not only under heavy tails or in the presence of outliers, but also under normality. An explanation for this counterintuitive result is that the corresponding long-run variance estimates are less affected by a scale change than in the case of the sample-variance-based test.

We use recent results on the process convergence of U-statistics and U-quantiles for dependent sequences to derive the limiting distribution of the test statistics and propose estimators for the long-run variance. We perform a simulations study to investigate the finite sample behavior of the test and their power. Furthermore, we demonstrate the applicability of the new change-point detection methods at two real-life data examples from hydrology and finance.
\end{abstract}

{\it Keywords:}
Asymptotic relative efficiency, Change-point analysis, Gini's mean difference, Long-run variance estimation, Median absolute deviation, $Q_n$ scale estimator, U-quantile, U-statistic



\section{Introduction}

The established approach to testing for changes in the scale of a univariate time series $X_1, \ldots, X_n$ is a CUSUM test applied to the squares of the centered observations, which may be written as 
\be \label{eq:CUSUM1}
 \hat{T}_{\sigma^2} =  \max_{1\leq k \leq n}\frac{k}{\sqrt{n}}\left| \hat{\sigma}^2_{1:k} - \hat{\sigma}^2_{1:n} \right|, 
\ee
where $\hat\sigma^2_{i:j}$ denotes the sample variance computed from the observations $X_i, \ldots, X_j$, $1 \le i < j \le n$.
To carry out the test in practice, the test statistic is usually divided by (the square root of) a suitable estimator of the corresponding long-run variance, cf.\ (\ref{eq:lrv.cusum}). This has first been considered by  \citet{Inclan_Tiao_1994}, who derive asymptotics for centered, normal, i.i.d.\ data. It has subsequently been extended by several authors to broader situations, e.g., \citet{Gombay_1996} allow the mean to be unknown and also propose a weighted version of the testing procedure, \citet{Lee_Park_2001} extend it to linear processes, and \citet{Wied_Arnold_Bissantz_Ziggel_2012} study the test for sequences that are $L_2$ NED on $\alpha$-mixing processes. A multivariate version was considered by \citet{aue:2009}.

The test statistic (\ref{eq:CUSUM1}) is prone to outliers. This has already been remarked by \citet{Inclan_Tiao_1994} and has led \citet{Lee_Park_2001} to consider a version of the test using trimmed observations. Outliers may affect the test decision in both directions: A single outlier suffices to make the test reject the null hypothesis at an otherwise stationary sequence, but more often one finds that outliers mask a change, and the test is generally very inefficient at heavy-tailed population distributions. An intuitive explanation is that, while outliers blow up the test statistic $\hat{T}_{\sigma^2}$, they do even more so blow up the long-run variance estimate, by which the test statistic is divided.\footnote{This applies in principle to any estimation method, bootstrapping or subsampling, not only to the kernel estimation method employed in the present article.}

 Writing the test statistic as in (\ref{eq:CUSUM1}) suggests this behavior may be largely attributed to the use of the sample variance as a scale estimator. 
The recognition of the extreme ``non-robustness'' of the sample variance and derived methods, in fact, stood at the beginning of the development of the  area of robust statistics as a whole \citep[e.g.][]{Box1953,Tukey_1960}. Thus, an intuitive way of constructing robust scale change-point tests is to replace the sample variance in (\ref{eq:CUSUM1}) by an alternative scale measure. 

We consider several popular scale estimators: the mean absolute deviation, the median absolute devation (MAD), the mean of all pairwise differences (Gini's mean difference) and sample quantiles of all pairwise differences. All of them allow an explicit formula and are computable in a finite number of steps. Particularly the latter two, the mean as well as sample quantiles of the pairwise differences  are promising candidates as they are almost as efficient as the standard deviation at normality and, hence, the improvement in robustness is expected to come at practically no loss in terms of power under normality. In fact, as it turns out, these tests can have a better power than the variance-based test also under normality. 

The paper is organized as follows: In Section \ref{section_scale_estimators} we review the properties of the scale estimators and detail on our choice to particularly consider pairwise-difference-based estimators. Section \ref{section_asymptotic_results} states the test statistics and long-run variance estimators and contains asymptotic results. Section \ref{section_qn} addresses the question which sample quantile of the pairwise differences is most appropriate for the change-point problem. Section \ref{section_simulation_results} presents simulation results. Section \ref{section_data_example} illustrates the behavior of the tests at informative real-life data examples. Appendix \ref{app_supp} contains supplementary material for Section \ref{section_qn}. Proofs are deferred to the Appendix \ref{app_proofs}.


\section{Scale Estimators}
\label{section_scale_estimators}

We use $\mathcal{L}(X)$ to denote the \emph{law}, i.e., the distribution, of any random variable $X$. We call any function $s: \mathcal{F} \to [0,\infty]$, 
where $\mathcal{F}$ is the set of all univariate distributions $F$, a \emph{scale measure} (or, analogously, a \emph{dispersion measure}) if it satisfies
\be \label{eq:scale.equiv}
	s(\mathcal{L}(a X + b)) = |a| s (\mathcal{L}(X))
	\qquad \mbox{ for all } a, b \in \R.
\ee
Although not being a scale measure itself, the variance $\sigma^2 = E(X-EX)^2$ is, in a lax use of the term, often referred to as such, since is closely related to the standard deviation $\sigma = \sqrt{\sigma^2}$, which is a proper scale measure in the above sense. A \emph{scale estimator} $s_n$ is then generally understood as the scale measure $s$ applied to the empirical distribution $F_n$ corresponding to the data set $\X_n = (X_1,\ldots, X_n)$. 
However, it is common in many situations to define the finite-sample version of the scale measure slightly different, due to various reasons. For instance, the sample variance $\sigma^2_n = \{(n-1)^{-1}\sum_{i=1}^{n}(X_i - \bar{X}_n)^2\}^{1/2}$ has the factor $(n-1)^{-1}$ due to the thus obtained unbiasedness. 

A concept that we will refer to repeatedly throughout is asymptotic relative efficiency. 
Letting $s_n$ be a scale estimator and $s$ the corresponding population value, the asymptotic variance $ASV(s_n) = ASV(s_n;F)$ of $s_n$ at the distribution $F$ is defined as the variance of the limiting normal distribution of $\sqrt{n}(s_n- s)$, when $s_n$ is evaluated at an independent sample $X_1,\ldots,X_n$ drawn from $F$. Generally, for two consistent estimators $a_n$ and $b_n$ estimating the same quantity $\theta \in \R$, i.e., $a_n \cip \theta$ and $b_n \cip \theta$, the asymptotic relative efficiency of $a_n$ with respect to $b_n$ at distribution $F$ is defined as 
\[
	ARE(a_n,b_n; F) = ASV(b_n;F)/ASV(a_n;F).
\]
In order to make two scale estimators $s_n^{(1)}$ and $s_n^{(2)}$ comparable efficiency-wise, we have to normalize them appropriately, and define their asymptotic relative efficiency at the population distribution $F$ as
\be \label{eq:are}
	ARE(s_n^{(1)},s_n^{(2)};F) \ = \ \frac{ASV(s_n^{(2)};F)}{ASV(s_n^{(1)};F)} \left\{\frac{s^{(1)}(F)}{s^{(2)}(F)}\right\}^2, 
\ee
where $s^{(1)}(F)$ and $s^{(2)}(F)$ denote the corresponding population values of the scale estimators $s_n^{(1)}$ and $s_n^{(2)}$, respectively.

In the following, we review some basic properties of four scale estimators: the mean deviation, the median absolute deviation (MAD), Gini's mean difference and the $Q_n$ scale estimator proposed by \citet{Rousseeuw_Croux_1992}.

Let $\md\left(F\right)$ denote the median of the distribution $F$, i.e, the center point of the interval $\{x\in \mathbb{R}| F\left(x-\right)\leq 1/2 \leq F\left(x\right)\}$, where $F\left(x-\right)$ denotes the left-hand limit. We define the mean deviation as $d(F) = E|X-\md(F)|$ and its empirical version as $d_n = \frac{1}{n-1}\sum_{i=1}^n|X_i - \md(F_n)|$. 

The question of whether to prefer the standard deviation or the mean deviation has become known as Eddington--Fisher debate.
The tentative winner was the standard deviation after Fisher (1920) showed that its asymptotic relative efficiency with respect to the mean deviation is 114\% at normality. However, \citet{Tukey_1960} pointed out that it is less efficient than the mean deviation if the normal distribution is slightly contaminated.
Thus the mean deviation appears to be a suitable candidate scale estimator for constructing less outlier-sensitive change-point tests.
\citet{Gerstenberger_Vogel} argue that, when pondering the mean deviation instead of the standard deviation for robustness reasons, it may be better to
use Gini's mean difference $g_n = \frac{2}{n(n-1)}\sum_{1\leq i<j\leq n}|X_i - X_j|$, i.e., the mean of the absolute distances of all pairs of observations.
The population version is $g\left(F\right) = E\left|X-Y\right|$, where $X, Y \sim F$ are independent. Gini's mean difference has qualitatively the same robustness under heavy-tailed distributions as the mean deviation, but retains an asymptotic relative efficiency with respect to the standard deviation of $98\%$ at the normal distribution \citep{Nair_1936}. 

Both estimators, mean deviation and Gini's mean difference, improve upon the variance and the standard deviation in terms of robustness, but are not robust in a modern understanding of the term. They both have an unbounded influence function 
and an asymptotic breakdown point of zero. Since robustness is, at least initially, our main motivation, it is natural to consider estimators that have been suggested particularly for that purpose.
A common highly robust scale estimator is the median absolute deviation (MAD), popularized by \citet{Hampel_1974}. The population value $m\left(F\right)$ is the median of the distribution of $|X-\md\left(F\right)| $ and the sample version $m_n = m_n\left(\X_n\right)$ is the median of the values $| X_i - \md(F_n)|$, $1 \leq i \leq n$. The MAD has a bounded influence function \citep[see][]{Huber_2009} and an asymptotic breakdown point of 50\%.
Its main drawback is its poor asymptotic efficiency under normality, which is only $37\%$ as compared to the standard deviation.
It is also unsuitable for change-in-scale detection due to other reasons that will be detailed in Sections \ref{section_qn} and \ref{section_simulation_results}. 

Similarly to going from the \emph{mean} absolute deviation to the \emph{median} absolute deviation, we may consider the median, or more generally any sample $\alpha$-quantile, of all pairwise differences. We call this estimator $Q_n^\alpha$ and the corresponding population scale measure $Q^\alpha$, i.e. $Q^\alpha = U^{-1}(\alpha) = \inf\{ x \,|\, U(x) \le \alpha \}$, where $U$ is the distribution function of $|X-Y|$ for $X, Y \sim F$ independent, and $U^{-1}$ the corresponding quantile function. For the precise definition of $Q_n^\alpha$, any sensible definition of the sample quantile can be employed. See, e.g., the nine different versions the R function quantile() offers. The asymptotic results we derive later are not affected, and any practical differences turn out to be negligible in the current context. So merely for simplicity, we define $Q_n^\alpha = U_n^{-1}(\alpha)$, where $U_n$ is the empirical distribution function associated with the sample $| X_i-X_j |$, $1\leq  i<j \leq n$. Thus letting $D_1, \ldots, D_{{n \choose 2}}$ be the elements of $\{ | X_i-X_j|\, | \,  1\leq  i<j \leq n \}$ in ascending order, we have $Q_n^\alpha = D_{\alpha n}$ if $\alpha n$ is integer and $Q_n^\alpha = D_{\lceil \alpha n \rceil}$ otherwise.\footnote{Note that the empirical 1/2-quantile in this sense does not generally coincide with the above definition of the sample median.}

In case of  Gini's mean difference, we observed that the transition from the average distance from the symmetry center to the average pairwise distance led to an increase in efficiency under normality. The effect is even more pronounced for the median distances, we have $ARE(Q_n^{0.5},\sigma_n,N(0,1)) = 86.3\%$. \citet{Rousseeuw_Croux_1993} propose to use the lower quartile, i.e., $\alpha = 1/4$, instead of the median. 
Specifically, they define the finite-sample version as the $\binom{\lfloor n/2 \rfloor + 1}{2}$th order statistic of the $\binom{n}{2}$ values $|X_i - X_j|$, $1 \le i < j \le n$. They call this estimator $Q_n$, and his become known under this name, which leads us to call the generalized version $Q_n^{\alpha}$.
The original $Q_n$ has an asymptotic relative efficiency with respect to the standard deviation at normality of $82\%$. \citet{Rousseeuw_Croux_1993} settle for the slightly lesser efficiency to achieve the maximal breakdown point of about 50\%.
However, this aspect is of much lesser relevance in the change-point context, quite the contrary, the very property of high-breakdown-point robustness is counterproductive for detecting change points. The original $Q_n$ is unsuited as a substitution for the sample variance in (\ref{eq:CUSUM1}), and a larger value of roughly $0.7 < \alpha < 0.9$ is much more appropriate for the problem at hand. We defer further explanations to Section \ref{section_qn}, where we discuss how to choose the $\alpha$ appropriately. 

These five scale measures, the standard deviation, the mean deviation $d_n$, Gini's mean difference $g_n$, the median absolute deviation (MAD) $m_n$, and the $\alpha$-sample quantile of all pairwise differences $Q_n^{\alpha}$, are the ones we restrict our attention to in the present article. They are summarized in Table~\ref{tab:scale_estimators} along with their sample versions.
There are of course many more potential scale estimators that satisfy the scale equivariance (\ref{eq:scale.equiv}) and more robust proposals in the literature, many of which include a data-adaptive re-weighting of the observations \citep[e.g.][Chapter 5]{Huber_2009}. 
In the present paper we explore the use of these common, easy-to-compute estimators in the change-point setting. They all admit explicit formulas, all can be computed in $O(n \log n)$ time, and the pairwise-difference estimators allow computing time savings for sequentially updated estimates (which are required in the change-point setting) -- more so than, e.g., implicitly defined estimators.
The two pairwise-difference based estimators, the average and the $\alpha$-quantile of all pairwise differences, possess promising statistical properties. We will mainly focus on these and derive their asymptotic distribution  under no change in the following section.

\begin{table} 
\renewcommand{\arraystretch}{1.5}
\caption{ \label{tab:scale_estimators}
	Scale estimators; $\md\left(F\right)$ denotes the median of the distribution $F$, $F_n$ its empirical distribution function.
}
\begin{tabular}{p{0.15\textwidth}|p{0.31\textwidth}|p{0.44\textwidth}}
\hline
 Scale \newline Estimator & Population value & Sample version\\ \hline
Standard \newline deviation 
& $\sigma\left(F\right)\! =\! \{E\left(X-E X\right)^2\}^{1/2}$ & $\displaystyle \sigma_n = \Big\{\frac{1}{n-1}\sum_{i=1}^{n}\left(X_i - \bar{X}_n\right)^2\Big\}^{1/2}$\\ \hline
Mean \newline deviation & $ d\left(F\right) = E\left|X-\md\left(F\right) \right|$ & $\displaystyle d_n= \frac{1}{n-1}\sum_{i=1}^n\left|X_i - \md(F_n )\right|$ \\ \hline
Gini's mean \newline difference & $g\left(F\right) = E\left|X-Y\right|$ & $\displaystyle g_n = \frac{2}{n(n-1)}\sum^{\phantom{0}}_{1\leq i<j\leq n}\left|X_i - X_j\right|$\\ \hline
Median \newline absolute \newline deviation 
& $m\left(F\right) = \md\left(G\right)$; where $G$ \newline is cdf of $\mathcal{L}\{\left|X-\md\left(F\right) \right|\} $ 
&  \smallskip $m_n = \md\left| X_i - \md(F_n)\right|$ \\ \hline
$Q_n^{\alpha}$ & 
$Q^{\alpha} = U^{-1}\left(\alpha\right)$; where $U$ \newline is cdf of $ \mathcal{L}\{\left| X-Y\right|\}$ &
$Q_n^{\alpha} = U_n^{-1}(\alpha)$; where $U_n$ is \newline empirical cdf of $| X_i-X_j |$, $1\!\leq\!  i\!<\!j\! \leq \!n$ \\
\hline
\end{tabular}

\end{table}


\section{Test statistics, long-run variances and asymptotic results}
\label{section_asymptotic_results}

We first describe the data model employed: a very broad class of data generating processes, incorporating heavy tails and short-range dependence (Section \ref{subsec:model}). We then propose several change-point test statistics based on the scale estimators introduced in the previous section and provide estimates for their long-run variances (Section \ref{subsec:teststatistics}). We show asymptotic results for Gini's mean difference  and the $Q_n^{\alpha}$  based tests under the null hypothesis (Sections \ref{subsec:qn} and \ref{subsec:GMD}, respectively)  and discuss methods for an optimal bandwidth selection for the long-run variance estimation (Section \ref{subsec:bandwidth}). 


\subsection{The data model}
\label{subsec:model}
We assume the data $X_1,\ldots,X_n$ to follow the model
\begin{equation}\label{Modell_X}
X_i = \lambda_i Y_i + \mu, \qquad 1\leq i \leq n,
\end{equation}
where $Y_1,\ldots, Y_n$ are part of the stationary, median-centered sequence $(Y_i)_{i \in \Z}$.
 We want to test the hypothesis
\[
	H_0 : \lambda_1 = \ldots = \lambda_n
\]
against the alternative
\[
	H_1 : \exists \, k\in \{1,\ldots,n-1\} : \lambda_1 = \ldots = \lambda_k \neq \lambda_{k+1} = \ldots = \lambda_n.
\]
Note that this set-up is completely moment-free.
We allow the underlying process to be dependent, more precisely we assume $\left(Y_i\right)_{i\in\mathbb{Z}}$ to be near epoch dependent in probability on an absolutely regular process. Let us briefly introduce this kind of short-range dependence condition.
\begin{definition} \label{def:absreg}
\begin{enumerate}
\item Let $\mathcal{A},\mathcal{B}\subset \mathcal{F}$ be two $\sigma$-fields on the probability space $\left(\Omega,\mathcal{F},\p\right)$. We define the absolute regularity coefficient of $\mathcal{A}$ and $\mathcal{B}$ by
\[
\beta\left(\mathcal{A},\mathcal{B}\right) = E\sup_{A\in \mathcal{A}}\left| \p\left(A|\mathcal{B}\right)-\p\left(A\right)\right|.
\]
\item Let $\left(Z_n\right)_{n\in \mathbb{Z}}$ be a stationary process. Then the absolute regularity coefficients of $\left(Z_n\right)_{n\in \mathbb{Z}}$ are given by
\[
\beta_k=\sup_{n\in\N}\beta\left(\sigma\left(Z_1,\ldots,Z_n\right),\sigma\left(Z_{n+k},Z_{n+k+1},\ldots\right)\right).
\]
We say that $\left(Z_n\right)_{n\in \mathbb{Z}}$ is absolutely regular, if $\beta_k \rightarrow 0$ as $k\rightarrow \infty$.
\end{enumerate}
\end{definition}
The model class of absolutely regular processes is a common model for short-range dependence. But it does not include important classes of time series, e.g., not all linear processes. Therefore, we will not study absolutely regular processes themselves, but approximating functionals of such processes. In this situation, $L_2$ near epoch dependent processes are frequently considered. 
Since we also consider quantile-based estimators with the advantage of moment-freeness, we want to avoid moment assumptions implicitly in the short-range conditions. For this reason we employ the concept of near epoch dependence in probability, introduced by \citet{DVWW14}. For further information see \citet[][Appendix A]{DVWW14}.
\begin{definition} \label{def:pned}
We call the process $\left(X_n\right)_{n\in \N}$ near epoch dependent in probability (or short P-NED) on the process $\left(Z_n\right)_{n\in \mathbb{Z}}$ if there is a sequence of approximating constants $\left(a_l\right)_{l \in \N}$ with $a_l \rightarrow 0$ as $l\rightarrow \infty$, a sequence of functions $f_l:\mathbb{R}^{2l+1}\rightarrow \mathbb{R}$ and a non-increasing function $\Phi:\left(0,\infty\right)\rightarrow\left(0,\infty\right)$ such that
\begin{equation}
\p\left(\left|X_0-f_l\left(Z_{-l},\ldots,Z_l\right)\right|>\epsilon\right)\leq a_l\Phi\left(\epsilon\right)
\end{equation}
for all $l\in \N$ and $\epsilon>0$.
\end{definition} 
The absolute regularity coefficients $\beta_k$ and the approximating constants $a_l$ will have to fulfill certain rate conditions that are detailed in Assumption \ref{ass:gmd}.


\subsection{Change-point test statistics and long-run variance estimates}
\label{subsec:teststatistics}
 
We test the null hypothesis $H_0$ against the alternative $H_1$ be means of CUSUM-type test statistics of the form
\begin{equation}\label{general.teststatistic}
	\hat{T}_s = \max_{1\leq k \leq n}\frac{k}{\sqrt{n}}\left| s_{1:k} - s_{1:n} \right|,
\end{equation}
Throughout, we use $s_n$ as generic notation for a scale estimator (where we include, for completeness' sake, the variance), and $s_{1:n}$ denotes the estimator applied to $X_1,\ldots, X_k$. Considering, besides the variance, the four scale estimators introduced in the previous section, we have the test statistics $\hat{T}_{\sigma^2}$, $\hat{T}_d$, $\hat{T}_g$, $\hat{T}_m$, and $\hat{T}_Q(\alpha)$ based on the sample variance, the mean deviation, Gini's mean difference, the median absolute deviation and the $Q_n^{\alpha}$ scale estimator, respectively. Under the null hypothesis, the sequence $X_1,\ldots, X_n$ is stationary, and can be thought of as being part of a stationary process $(X_i)_{i\in\Z}$ with marginal distribution $F$
(i.e., $X_i = \lambda Y_i + \mu$).
Then, under suitable regularity conditions (that are specific to the choice of $s_n$), the test statistic $\hat{T}_s$ converges in distribution to $D_{s} \sup_{0\le t \le 1} |B(t)|$, where $B$ is a Brownian bridge.  The quantity $D^2_s$ is referred to as the long-run variance. It depends on the estimator $s_n$ and the data generating process. Expressions for the scale estimators considered here are given below. The distribution of $\sup_{0\le t \le 1}|B(t)|$ is well known and sometimes referred to as Kolmogorov distribution. However, $D^2_s$ is generally unknown, depends on the distribution of the whole process $(X_i)_{i\in\Z}$ and must be estimated when applying the test in practice.\footnote{Alternatively, bootstrapping can be employed, we take up this discussion in Section \ref{section_conclusion_and_outlook}.}

In the following definitions, let $X,Y \sim F$ be independent. The long-run variances corresponding to the scale estimators under consideration are
\[
	D^2_{\sigma^2} = \sum_{h=-\infty}^\infty \cov \left\{ (X_0 - E X_0)^2, (X_h-E X_h)^2 \right\}
\]
\[
	D^2_d = \sum_{h=-\infty}^\infty \cov \left( |X_0-\md(F) |, |X_h -\md(F)|\right),
\]
\be \label{eq:lrv.gmd}
	D^2_g = 4 \sum_{h=-\infty}^{\infty} \cov \left(\varphi(X_0),\varphi(X_h)\right),
\ee
where $\varphi(x) = E \left| x-Y\right| - g(F)$,
\[
	D^2_m = \frac{1}{f_Z^2(m(F))} \sum_{h =-\infty}^\infty \cov \left(\ind{|X_0-\md(F)| \le m(F)}, \ind{|X_h-\md(F)| \le m(F)}\right),
\]
where $f_Z$ is the density of $Z = |X-\md(F)|$, and 
\be \label{eq:lrv.qn}
	D^2_Q(\alpha) = \frac{4}{u^2(Q^\alpha(F))}\sum_{h=-\infty}^{\infty} 
	\cov \left( \psi(X_0),\, \psi(X_h)\right),
\ee
where $\psi(x)= P(|x-Y| \leq Q^\alpha) - \alpha$ and $u(t)$ is the density associated with the distribution function $U(t) = P(|X-Y|\leq t)$ of $|X-Y|$. 
An intuitive derivation of the expressions for $D^2_g$ and $D^2_Q(\alpha)$
are given in Appendix \ref{app_proofs}.

The following long-run variance estimators follow the construction principle of heterscedasticity and autocorrelation consistent (HAC) kernel estimators, for which we use results by \citet{dejong:2000}. The HAC kernel function (or weight function) $W$ can be quite general, but has to fulfill Assumption \ref{ass:W.b_n} (a) below, which is basically Assumption 1 of \citet{dejong:2000}. 
There is further a bandwidth to choose, which basically determines up to which lag the autocorrelations are included. For consistency of the long-run variance estimate, the bandwidth has to fulfill the rate condition of Assumption \ref{ass:W.b_n} (b).

\begin{assumption}\label{ass:W.b_n}
\mbox{ \\ } \begin{enumerate}[(a)]
\item
The function $W: \left[0,\infty \right)\rightarrow \left[-1,1\right]$ is continuous at $0$ and at all but a finite number of points and $W(0)=1$. Furthermore, $\left| W \right|$ is dominated by a non-increasing, integrable function and
\[
\int_{0}^{\infty}\left|\int_{0}^{\infty}W\left(t\right)\cos\left(xt\right)dt\right|dx < \infty.
\] 
\item
The bandwidth $b_n$ satisfies $b_n\rightarrow\infty$ as $n\rightarrow \infty$ and $b_n/\sqrt{n}\rightarrow 0$.
\een
\end{assumption}

We propose the following long-run variance estimators for the three moment-based scale measures: For the variance we take a weighted sum of empirical autocorrelations of the centered squares of the data, i.e.,  
\be \label{eq:lrv.cusum}
	\hat{D}^2_{\sigma^2,n} = 
	\sum_{k=-(n-1)}^{n-1}W \left(\frac{|k|}{b_n}\right)
	\frac{1}{n}\sum_{i=1}^{n-|k|} \left\{ (X_i - \bar{X}_n)^2-\sigma^2_n \right\} \left\{ (X_{i+|k|} -  \bar{X}_n)^2 - \sigma^2_n\right\}, 
\ee
where $\bar{X}_n$ and $\sigma^2_n$ denote the sample mean and the sample variance, respectively, computed from the whole sample, cf.\ Table~\ref{tab:scale_estimators}. Similar expressions have been considered, e.g., by \citet{Gombay_1996}, \citet{Lee_Park_2001} and \citet{Wied_Arnold_Bissantz_Ziggel_2012}. There are, of course, other possibilities regarding the exact definition the long-run variance estimate (e.g., use factor $1/n$ or $1/(n-k)$), the choice of which turns out to have an negligible effect. The strongest impact has the choice of the bandwidth $b_n$ (see Section \ref{subsec:bandwidth}). 

For the mean deviation, we propose
\[
	\hat{D}_{d}^2 = \sum_{k=-(n-1)}^{n-1} W \left(\frac{|k|}{b_n}\right)
	\frac{1}{n}\sum_{i=1}^{n-|k|}\left( |X_i-\md(F_n)|-d_n\right)\left(|X_{i+|k|}-\md(F_n)|-d_n\right),
\]
where $md(F_n)$ and $d_n$ denote the sample median and the sample mean deviation, respectively, of the whole sample (cf.\ Table~\ref{tab:scale_estimators}). For Gini's mean difference we consider
\[
	\hat{D}_{g}^2 = \, 4 \! \sum_{k=-(n-1)}^{n-1} W\left(\frac{|k|}{b_n}\right)
	\frac{1}{n}\sum_{i=1}^{n-|k|} \hat\varphi_n(X_i)\hat\varphi_n(X_{i+|k|}), 
\]
where 
\[
	\hat\varphi_n(x) =  \frac{1}{n}\sum_{i=1}^{n}\left|x-X_i\right| - g_n
\]
is an empirical version of $\varphi(x)$ in (\ref{eq:lrv.gmd}).
For the long-run variance estimates for the quantile-based scale measures $m_n$ and $Q_n^{\alpha}$, we need estimates for the densities $f_Z$ and $u$, respectively, for which we use kernel density estimates
\[
	\hat{f}(t) = \frac{1}{n h_n}\sum_{i=1}^n K\left(\frac{|X_i-\md(\hat{F})|-t}{h_n}\right),
\]
\[
	\hat{u}(t) = \frac{2}{n(n-1)h_n}\sum_{1\leq i<j\leq n}K\left(\frac{|X_i-X_j|-t}{h_n}\right).
\]
The density kernel $K$ and the bandwidth $h_n$ have to fulfill the following conditions.
\begin{assumption}\label{condition_K_d_n}
Let $K:\R \to \R$ be  symmetric (i.e.\ $K(x) = K(-x)$), Lipschitz-continuous function with bounded support which is of bounded variation and integrates to 1. Let the bandwidth $h_n$ satisfy $h_n\rightarrow 0$ and $n\, h_n^{8/3}\rightarrow \infty,$ as $n\rightarrow \infty$.
\end{assumption}
If we were interested in an accurate estimate of the whole density function, the estimate could be further improved by incorporating the fact that its support is the positive half-axis, but since we are only interested in density estimates at one specific point, this is not necessary. 
Define
\[
	D^2_{m} = \frac{1}{\hat{f}_Z(m_n)} \sum_{k=-(n-1)}^{n-1} W \left(\frac{|k|}{b_n}\right)
	\frac{1}{n}\sum_{i=1}^{n-|k|} \hat\xi_n(X_i)\hat\xi_n(X_{i+|k|})
\]
with
\[
	\hat\xi_n(x) = \Ind{|x-\md(F_n)|\le m_n} - 1/2, 
\]
and
\[
	D^2_{Q}(\alpha) = \frac{4}{\hat{u}(Q_n^\alpha)} \sum_{k=-(n-1)}^{n-1} W \left(\frac{|k|}{b_n}\right)
	\frac{1}{n}\sum_{i=1}^{n-|k|} \hat\psi_n(X_i)\hat\psi_n(X_{i+|k|})
\]
with
\[
	\hat\psi_n(x) = \frac{1}{n}\sum_{i=1}^{n} \Ind{|x-X_i|\le Q_n^{\alpha}} - \alpha. 
\]

Keep in mind that in the expressions above, the HAC bandwidth $b_n$ and the kernel density bandwidth $h_n$ play distinctively different roles: $b_n$ increases as $n$ increases, whereas $h_n$ decreases to zero. Also, the kernels $W$ and $K$ serve different purposes: $W$ is an HAC-kernel and is scaled such that $W(0)=1$, while $K$ is a density kernel and it is scaled such that it integrates to 1. 

Below, in Sections \ref{subsec:GMD} and \ref{subsec:qn}, we give sufficient conditions for the convergence of the studentized test statistics $\hat{D}^{-1}_g \hat{T}_g$ and $\hat{D}^{-1}_Q(\alpha) \hat{T}_Q(\alpha)$, respectively, since the corresponding estimators, as outlined in Section \ref{section_scale_estimators}, exhibit the best statistical properties, and these tests indeed show the best performance, as demonstrated in Section \ref{section_simulation_results}. 
The variance-based test statistic $\hat{D}^{-1}_{\sigma^2} \hat{T}_{\sigma^2}$, or versions of it, has been considered by several authors. It is treated for $L_2$ NED on $\alpha$-mixing processes by \citet{Wied_Arnold_Bissantz_Ziggel_2012}. As for the mean-deviation-based test statistic $\hat{D}^{-1}_d \hat{T}_d$, the convergence can be shown by similar techniques as for $\hat{D}^{-1}_{\sigma^2} \hat{T}_{\sigma^2}$: the same $(2+\delta)$ moment condition as for Gini's mean difference along with corresponding rate for the short-range dependence conditions (Assumption \ref{ass:gmd}) are required. Additionally, a smoothness condition around $\md(F)$ is necessary to account for the estimation of the central location. For the MAD-based test statistic $\hat{D}^{-1}_m \hat{T}_m$, no moment conditions are required, but smoothness conditions on $F$ at $\md(F)$ as well as $m(F) = |X-\md(F)|$, $X \sim F$, However, it turns out that the MAD does not provide a workable change-point test. Roughly speaking, the estimate is rather coarse, and the convergence to the limit distribution too slow to yield usable critical values. But even, for large $n$ or with the use of bootstrapping methods, the test is dominated in terms of power by the other tests considered.


\subsection{Gini's mean difference}
\label{subsec:GMD} 

We assume the following condition on the data generating process.

\begin{assumption}\label{ass:gmd}
Let $(X_i)_{i \in Z}$ be a stationary process that is $P$NED on an absolutely regular sequence $(Z_n)_{n\in\Z}$. There is 
a $\delta > 0$ such that
\begin{enumerate}[(a)]
\item
the $P$NED approximating constants $a_l$ and the absolute regularity coefficients $\beta_k$ satisfy
\be \label{eq:srd}
	a_l\Phi\left(l^{-6}\right) = O\left(l^{-6\frac{2+\delta}{\delta}}\right)
 \mbox{ as } l\rightarrow\infty \qquad \mbox{ and } \qquad 
 \sum_{k=1}^{\infty}k\beta_k^{\frac{\delta}{2+\delta}} < \infty,
\ee
where $\Phi$ is defined in Definition \ref{def:pned}, and 
\item
there is a positive constant $M$ such that $E| X_0|^{2+\delta} \leq M$ and $E \left| f_l\left(Z_{-l}, \ldots, Z_l\right)\right|^{2+\delta} \leq M$ for all $l \in \N$.
\end{enumerate}
\end{assumption}
Then we have the following result about the asymptotic distribution of the test statistic $\hat{D}^{-1}_g \hat{T}_g$ based on Gini's mean difference under the null hypothesis. The proof is given in Appendix \ref{app_proofs}.

\begin{theorem}\label{th:gmd}
Under Assumptions \ref{ass:W.b_n}, \ref{condition_K_d_n}, and \ref{ass:gmd}, we have 
$\hat{D}^{-1}_g \hat{T}_g \cid \sup_{0 \leq \lambda \leq 1}\left| B\left(\lambda\right)\right|$, where $\left(B\left(\lambda\right)\right)_{0 \leq \lambda \leq 1}$ is a standard Brownian bridge.
\end{theorem}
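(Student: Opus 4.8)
The plan is to reduce the Gini-based statistic to a CUSUM of first Hoeffding projections and then to handle the numerator (process convergence) and the studentizing denominator (long-run variance consistency) separately, combining them by Slutsky's lemma at the end. The starting point is that $g_{1:k}$ is a U-statistic of degree two with kernel $h(x,y)=|x-y|$ and mean $g(F)$, whose first-order projection is exactly the function $\varphi$ from (\ref{eq:lrv.gmd}). The Hoeffding decomposition gives
\[
 g_{1:k}-g(F)=\frac{2}{k}\sum_{i=1}^{k}\varphi(X_i)+R_k ,
\]
with $R_k$ the canonical (degenerate) second-order remainder. Substituting this into (\ref{general.teststatistic}), the numerator becomes, up to remainders, precisely the CUSUM process of the stationary sequence $\varphi(X_i)$:
\[
 \frac{k}{\sqrt n}\bigl(g_{1:k}-g_{1:n}\bigr)=\frac{2}{\sqrt n}\Bigl(\sum_{i=1}^{k}\varphi(X_i)-\frac{k}{n}\sum_{i=1}^{n}\varphi(X_i)\Bigr)+\frac{k}{\sqrt n}\bigl(R_k-R_n\bigr).
\]

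The main analytic input is an invariance principle for this U-statistic process under the dependence structure of Assumption \ref{ass:gmd}. Since $\varphi$ is Lipschitz and $E|X_0|^{2+\delta}\le M$, the sequence $\varphi(X_i)$ inherits the $P$NED property with summable approximating constants and a finite $(2+\delta)$-th moment, and the rate conditions (\ref{eq:srd}) are exactly those under which the process-convergence results for U-statistics of $P$NED sequences of \citet{DVWW14} apply. Invoking these, the partial-sum process $\lambda\mapsto\frac{2}{\sqrt n}\sum_{i=1}^{\lfloor n\lambda\rfloor}\varphi(X_i)$ converges weakly in $D[0,1]$ to $D_g\,\mathcal W(\lambda)$, where $\mathcal W$ is a standard Brownian motion (not to be confused with the HAC kernel $W$) and $D_g^2=4\sum_h\cov(\varphi(X_0),\varphi(X_h))$, while the remainder is uniformly negligible, $\sup_{k}\frac{k}{\sqrt n}|R_k|=o_P(1)$. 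The continuous-mapping theorem applied to $x\mapsto\sup_{0\le\lambda\le1}|x(\lambda)-\lambda x(1)|$ then yields $\hat T_g\cid D_g\sup_{0\le\lambda\le1}|B(\lambda)|$, where $B(\lambda)=\mathcal W(\lambda)-\lambda\mathcal W(1)$ is the standard Brownian bridge.

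It remains to prove $\hat D^2_g\cip D^2_g$. Here I would first replace the plug-in weights $\hat\varphi_n$ by the population projection $\varphi$: the oracle estimator $4\sum_k W(|k|/b_n)\frac1n\sum_i\varphi(X_i)\varphi(X_{i+|k|})$ is consistent for $D^2_g$ by the HAC theory of \citet{dejong:2000}, whose hypotheses hold because $\varphi(X_i)$ is $L_2$-NED with the required rate under Assumption \ref{ass:gmd} and because $W,b_n$ satisfy Assumption \ref{ass:W.b_n}. I would then bound the substitution error using the decomposition $\hat\varphi_n(x)-\varphi(x)=\{\frac1n\sum_i|x-X_i|-E|x-Y|\}-(g_n-g(F))$, whose first brace is controlled uniformly in $x$ by a one-sample empirical/U-process bound and whose second term is $O_P(n^{-1/2})$; each lag then contributes an error of order $n^{-1/2}$, and summing over the $O(b_n)$ retained lags gives a total error $O_P(b_n/\sqrt n)=o_P(1)$ by Assumption \ref{ass:W.b_n}(b). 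Slutsky's lemma then completes the proof.

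The hard part will be this last step: controlling the plug-in error in $\hat D^2_g$ uniformly across a growing number of lags, since the summands are products of two estimated quantities and one must prevent the estimation errors, and their cross-products with the true projections, from accumulating as $b_n\to\infty$. This is precisely where the bandwidth restriction $b_n/\sqrt n\to0$ enters, and it must be combined carefully with the NED and mixing bounds underlying \citet{dejong:2000}. A secondary, more routine difficulty is the tightness of the U-statistic process near $\lambda=0$, which guarantees that the small-$k$ terms of the maximum, where $g_{1:k}$ rests on few observations, are genuinely negligible; this is subsumed in the cited invariance principle but relies on the full strength of the moment and dependence-rate conditions of Assumption \ref{ass:gmd}.
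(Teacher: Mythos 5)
Your proposal is correct in substance, but it rebuilds by hand what the paper obtains by a single citation. The paper's entire proof is that the studentized statistic $\hat{D}^{-1}_g \hat{T}_g$ falls under Corollary 2.8 of \citet{DVWW14}, a result which already covers \emph{both} the process convergence of the U-statistic CUSUM \emph{and} the consistency of the plug-in HAC estimator $\hat{D}^2_g$; the verification then reduces to matching Assumption \ref{ass:gmd} to the dependence and moment hypotheses there, matching Assumption \ref{ass:W.b_n} to the HAC hypothesis, and checking the one substantive remaining condition, the variation condition $E\bigl(\sup_{|x-X|\le\epsilon,\,|y-Y|\le\epsilon}|g(x,y)-g(X,Y)|\bigr)^2\le L\epsilon$, which holds trivially because the kernel $g(x,y)=|x-y|$ is Lipschitz. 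Your two-part route---Hoeffding projection onto $\varphi$, invariance principle and uniform negligibility of the degenerate part for the numerator; oracle HAC consistency via \citet{dejong:2000} plus a plug-in error of order $O_P(b_n/\sqrt{n})$ for the denominator; Slutsky to combine---is exactly the machinery inside that corollary, so nothing in it would fail. Two remarks, though. First, the step you yourself flag as hard (controlling $\hat\varphi_n-\varphi$ across $O(b_n)$ lags, including the uniform-in-$x$ rate $O_P(n^{-1/2})$ for $\frac1n\sum_i|x-X_i|-E|x-Y|$ over the whole real line) is precisely what the citation packages away, and in your version it remains an assertion rather than a proof; the paper avoids having to supply this empirical-process argument at all. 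Second, you never name the variation condition that governs the degenerate remainder $R_k$: you invoke the Lipschitz property of $\varphi$, but what DVWW14 actually requires is this condition on the kernel itself, and the paper treats its (immediate) verification as the one point worth writing out. What your longer route buys is transparency about where each assumption enters; what the paper's route buys is brevity and rigor by reduction to a published studentized result.
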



\subsection{$\boldsymbol{Q_n^{\alpha}}$}
\label{subsec:qn}

Since $Q_n^{\alpha}$ is a quantile-based estimator, we require no moment condition, and it suffices that the short-range dependence condition (\ref{eq:srd}) is satisfied for ``$\delta = \infty$''. 

\begin{assumption} \label{ass:qn.srd.rate}
Let $(X_i)_{i \in Z}$ be a stationary process that is $P$NED of an absolutely regular sequence $(Z_n)_{n\in\Z}$ such that 
the $P$NED approximating constants $a_l$ and the absolute regularity coefficients $\beta_k$ satisfy 
\[
	a_l\Phi(l^{-6}) = O(l^{-6}) 
	\ \mbox{ as  } l \rightarrow\infty \qquad \mbox{ and } \qquad 
	\sum_{k=1}^{\infty}k\beta_k < \infty,
\]
where $\Phi$ is defined in Definition \ref{def:pned}.
\end{assumption}

However, instead of the moment condition we require a smoothness condition on $F$.
\begin{assumption} \label{ass:smoothness.F}
The distribution $F$ has a Lebesgue density $f$ such that 
\begin{enumerate}[(a)]
\item
$f$ is bounded, 
\item
the support of $f$, i.e., $\overline{\{ x | f(x)  > 0 \}}$, is a connected set, and
\item
the real line can be decomposed in finitely many intervals such that $f$ is continuous and (non-strictly) monotonic on each of them. 
\end{enumerate}
\end{assumption}

We are now ready to state the following result concerning the asymptotic distribution of the $Q_n^{\alpha}$-based change-point test statistic. The proof is given in Appendix \ref{app_proofs}.

\begin{theorem}\label{th:qn}
Under Assumptions \ref{ass:W.b_n}, \ref{condition_K_d_n}, \ref{ass:qn.srd.rate}, and \ref{ass:smoothness.F}, we have, for any fixed $0 < \alpha < 1$ that
$\hat{D}^{-1}_{Q,n}(\alpha) \hat{T}_Q(\alpha) \cid \sup_{0 \leq \lambda \leq 1}\left| B\left(\lambda\right)\right|$,
where $\left(B\left(\lambda\right)\right)_{0 \leq \lambda \leq 1}$ is a standard Brownian bridge.
\end{theorem}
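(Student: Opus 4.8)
The plan is to reduce the studentized statistic to a functional of an ordinary partial-sum process of a fixed score and then apply an invariance principle. The cornerstone is a \emph{sequential} Bahadur-type representation for the U-quantile $Q_{1:k}^\alpha$. Writing $U_k$ for the empirical U-distribution function of the pairwise differences $|X_i-X_j|$, $1\le i<j\le k$, I would first establish, uniformly in $k$,
\[
	Q_{1:k}^\alpha - Q^\alpha = -\frac{1}{u(Q^\alpha)}\bigl(U_k(Q^\alpha)-\alpha\bigr) + R_k ,
\]
and then linearize the U-statistic $U_k(Q^\alpha)$ by its Hoeffding projection, whose leading term is $\tfrac{2}{k}\sum_{i=1}^{k}\psi(X_i)$ with $\psi(x)=P(|x-Y|\le Q^\alpha)-\alpha$, exactly the score in (\ref{eq:lrv.qn}). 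Combining the two steps gives
\[
	\frac{k}{\sqrt n}\bigl(Q_{1:k}^\alpha - Q^\alpha\bigr) = -\frac{2}{u(Q^\alpha)}\,\frac{1}{\sqrt n}\sum_{i=1}^{k}\psi(X_i) + o_P(1)
\]
uniformly in $k$; the factor $2$ and the reciprocal density are what produce the constant $4/u^2(Q^\alpha)$ in the long-run variance. Assumption \ref{ass:smoothness.F} enters here to guarantee that $U$ is continuously differentiable with $u(Q^\alpha)>0$, so that $Q^\alpha$ is well identified and the representation is valid.

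Second, forming the CUSUM difference and inserting the representation, the core process becomes
\[
	\frac{k}{\sqrt n}\bigl(Q_{1:k}^\alpha - Q_{1:n}^\alpha\bigr) = -\frac{2}{u(Q^\alpha)}\Bigl\{\frac{1}{\sqrt n}\sum_{i=1}^{k}\psi(X_i) - \frac{k}{n}\,\frac{1}{\sqrt n}\sum_{i=1}^{n}\psi(X_i)\Bigr\} + o_P(1).
\]
Setting $k=\lfloor n\lambda\rfloor$ and applying a functional central limit theorem for the partial sums of $\psi(X_i)$ --- valid for $P$NED functionals of absolutely regular sequences under the short-range-dependence rate of Assumption \ref{ass:qn.srd.rate} --- the bracket converges in $D[0,1]$ to $D_\psi\,(W(\lambda)-\lambda W(1))=D_\psi B(\lambda)$, a Brownian bridge scaled by $D_\psi^2=\sum_h\cov(\psi(X_0),\psi(X_h))$. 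Hence the unstudentized sequential process converges to $D_Q(\alpha)\,B(\lambda)$ with $D_Q^2(\alpha)$ as in (\ref{eq:lrv.qn}).

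Third, for the studentization I would show $\hat D_Q(\alpha)\cip D_Q(\alpha)$ in two pieces: (i) the kernel density estimate satisfies $\hat u(Q_n^\alpha)\cip u(Q^\alpha)$, which follows from the consistency $Q_n^\alpha\cip Q^\alpha$, the continuity of $u$ at $Q^\alpha$ (a consequence of Assumption \ref{ass:smoothness.F}), and the kernel conditions of Assumption \ref{condition_K_d_n}; and (ii) the HAC part consistently estimates $\sum_h\cov(\psi(X_0),\psi(X_h))$ via \citet{dejong:2000} under Assumption \ref{ass:W.b_n}, after controlling the replacement of the unknown score $\psi$ by its plug-in $\hat\psi_n$ (using again $Q_n^\alpha\cip Q^\alpha$ together with the uniform convergence of $\hat\psi_n$ to $\psi$). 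A Slutsky argument transfers the weak convergence to the studentized process, and the continuous mapping theorem applied to the supremum functional --- continuous on $D[0,1]$ because the limit $B$ has continuous paths --- yields $\hat D_Q^{-1}(\alpha)\hat T_Q(\alpha)\cid\sup_{0\le\lambda\le1}|B(\lambda)|$.

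The main obstacle is the first step: establishing the Bahadur representation uniformly in $k$ with a remainder that, after multiplication by the CUSUM weight $k/\sqrt n$, is $o_P(1)$ uniformly. Because $Q_{1:k}^\alpha$ is a nonlinear functional of the U-empirical process built from only $\binom{k}{2}$ dependent pairs, this needs an oscillation (modulus-of-continuity) bound for the sequential U-empirical process that is uniform in the threshold near $Q^\alpha$ and in $k$. The small-$k$ range is delicate, since $Q_{1:k}^\alpha$ is erratic there, but the weight $k/\sqrt n$ is correspondingly small and those terms are negligible; for the bulk I would rely on the process-convergence theory for U-quantiles of $P$NED sequences underlying the paper \citep{DVWW14}, invoking the dependence rate of Assumption \ref{ass:qn.srd.rate} (the ``$\delta=\infty$'' analogue of (\ref{eq:srd})) to make the remainder vanish.
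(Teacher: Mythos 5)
Your plan is correct in substance, but it takes a genuinely different route from the paper. The paper's proof is a short reduction: Theorem \ref{th:qn} is obtained as a corollary of Corollary 2.5\,(B) of \citet{VW14}, and the proof consists solely of checking that corollary's six assumptions. The machinery you sketch --- the sequential Bahadur representation for $U_k^{-1}(\alpha)$, the Hoeffding projection of $U_k(Q^\alpha)$ onto $\frac{2}{k}\sum_{i\le k}\psi(X_i)$, the functional CLT for the $\psi$-partial sums under $P$NED, HAC consistency via \citet{dejong:2000}, and Slutsky plus continuous mapping --- is in essence the proof of that cited corollary, so your argument is sound exactly to the extent that you defer the uniform-in-$k$ remainder control to \citet{DVWW14} and \citet{VW14}, which you do explicitly in your last paragraph. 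What the paper's route buys is that the only work left is assumption-checking, and the single nontrivial check is the piece your sketch glosses over: the cited corollary does not merely require $U$ differentiable with $u(Q^\alpha)>0$, but (i) a variation condition for the kernel family $h_t(x,y)=\mathds{1}\{|x-y|\le t\}$ holding uniformly for $t$ in a neighborhood of $Q^\alpha$, and (ii) the Bahadur-rate condition $\bigl\vert (U(t)-\alpha)/(t-Q^\alpha)-u(Q^\alpha)\bigr\vert = O\bigl(|t-Q^\alpha|^{1/2}\bigr)$, which is strictly stronger than continuity of $u$ at $Q^\alpha$. The paper derives both from Assumption \ref{ass:smoothness.F}: boundedness of $f$ yields (i); the convolution representation $u(x)=2\int f(t)f(t-x)\,dt$ combined with the piecewise-monotonicity condition \ref{ass:smoothness.F}(c) gives $|u(x)-u(y)|\le 2m\|f\|_\infty^2|x-y|$, so $u$ is Lipschitz, which delivers (ii) with the even better rate $O(|t-Q^\alpha|)$, while parts (a) and (b) keep $u$ bounded away from $0$ and $\infty$ near $Q^\alpha$. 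If you were to complete your write-up, this density-level-to-$U$-level implication is the one argument you would actually have to supply; the rest is contained in the process-convergence results you cite.
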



\subsection{Data-adaptive bandwidth selection}
\label{subsec:bandwidth}

The rate conditions of Assumptions \ref{ass:W.b_n} on the HAC bandwidth $b_n$ to achieve consistency of the long-run variance estimate are rather mild. However, the questions remains  how to choose $b_n$ optimally for a given sequence of observations of length $n$. The answer depends on the degree of serial dependence present in the sequence. Loosely speaking, choosing it too small results in a size distortion, choosing it too large will render the tests conservative and less powerful. 

A general approach is to assume a parametric time-series model, minimize the mean squared error in terms of the parameters of the model and then plug-in estimates for the parameters. For instance, to estimate the long-run variance of an AR(1) process with auto-correlation parameter $\rho$
\citet[][Section 6]{andrews:1991} gives an optimal bandwidth of
\be \label{eq:optimal.bandwidth}
	b_n = 1.447 n^{1/3} \left(\frac{4 \rho^2}{(1-\rho^2)^2 }\right)^{1/3}
\ee
when the Bartlett kernel is used. 

When employing alternative methods to estimate the long-run variance, i.e., block sub-sampling or block bootstrap, one faces the similar challenge of selecting an appropriate blocklength, and techniques that have been derived for data-adaptive blocklength selection may be ``borrowed'' for bandwidth selection. For instance, \citet{Carlstein.1986} obtains a very similar expression as (\ref{eq:optimal.bandwidth}) for an optimal blocklength in the AR(1) setting for a block sub-sampling scheme. 

\citet{Politis.2004} (see also \citet{Patton.2009}) derive optimal blocklengths for block bootstrap methods and obtain also an expression similar to (\ref{eq:optimal.bandwidth}). This technique is adapted, e.g., by \citet{Kojadinovic.2015} for a change-point test based on Spearman's rho. Furthermore, \citet{Bucher.2016} identified the asymptotically optimal blocklength for the multiplier bootstrap of U-statistics and developed a method to estimate it.

All such methods require the estimation of unknown parameters as, e.g., in (\ref{eq:optimal.bandwidth}) the autocorrelation. Alongside exploring alternatives to the sample variance for robust scale estimation, it is advisable to consider alternatives to the sample autocorrelation. A recent comparison of robust autocorrelation estimators is given by \citet{Duerre.2015}, who recommend to use Gnanadesikan-Kettenring-type estimators. Such estimates require robust scale estimates, for which also the $Q_n$ has been proposed \citep{ma:genton:2000}.

\citet{hall1995blocking} have suggested an alternative approach without estimating autocorrelations: If the rate, but not the constant, in an expression as (\ref{eq:optimal.bandwidth}) is known, the block length can be chosen by sub-sampling. For $U$-statistics, the rate is $n^{1/3}$ in the case of the Bartlett kernel \citep[][]{dehling2010central} and $n^{1/5}$ for flat-top kernels \citep[][]{Bucher.2016}. Results on the optimal block length for quantiles and U-quantiles are not known to us, but we suspect that the rates for the optimal block length are the same as for partial sums, because our variance estimator uses an linearization.


\section{The choice of $\alpha$ for $Q_n^\alpha$}
\label{section_qn}


In this section, we address the question which $\alpha$ to choose best when employing the $Q_n^\alpha$ scale estimator.
The theoretical U-quantile results of the previous setting apply to $Q_n^\alpha$ for any $0 < \alpha < 1$. 
The original $Q_n$ proposed by \citet{Rousseeuw_Croux_1993} corresponds roughly to $\alpha = 1/4$, which is mainly motivated by breakdown considerations. 
A high-breakdown-estimator is in fact counterproductive for change-point detection purposes, and higher values of $\alpha$ are more appropriate for the problem at hand.
A high-breakdown-point estimator is designed, in some sense, to ignore a large portion of outliers, no matter how they are distributed, spatially or temporarily. 
The perception of robustness in the change-point setting is conceptually different: we want to safeguard against a few outliers or several but evenly distributed over the observed sequence, as they may be generated by a heavy-tailed distribution. A subsequence of outliers on the other hand, which even exhibits some common characteristics differing from the rest of the data, constitutes a structural change, which we want to detect rather than ignore. 
To illustrate this point consider the following example: we have a sample where the first 60\% of the observations alternate between $1$ and $-1$ and  the second 40\% alternate between $100$ and $-100$. This constitutes a very noticeable change in the scale, but the $Q_n^{0.25}$-based change-point test of the CUSUM type is not able to detect this change. 
Similar counter-examples can be constructed for analogous change-point tests which compare the first part to the second part of the sample. 
However, this very artificial example should not be given to much attention. Any quantile-based method is generally suitable for continuous distributions only. 

The point may also be illustrated by Monte Carlo simulation results for continuous distributions. Consider the following data situation: starting from an i.i.d.\ sequence of standard normal observations, we multiply the second half by some value $\lambda$. Table \ref{tab:qn.schlecht} shows rejection frequencies at the asymptotic $5\%$ level based on the original $Q_n$ for sample sizes $n=60$ and $n=500$ and several values of $\lambda$ (bandwidth as in the simulations in Section \ref{section_simulation_results}, i.e., $b_n = 2n^{1/3}$). 
\begin{table}
\small
\caption{Power of change-point tests at asymptotic 5\% level, based on sample variance/Gini's mean difference/$Q_n$ for independent, centered normal observations. Standard deviation changes from 1 in first half to $\lambda$ in second half.  \label{tab:qn.schlecht}}
\begin{center}
\begin{tabular}{c|ccccccc}
 & \multicolumn{5}{c}{standard deviation $\lambda$ in second half }\\
sample size & $1.0$ &  $1.2$ & $1.5$ & $2.0$ & $3.0$ \\
\hline
$n=60$	  &   .04/.04/.44  & .04/.07/.32 & .12/.24/.18 & .23/.51/.06 & .36/.82/.01\\
$n=500$   &   .04/.04/.08 &  .63/.64/.52 & 1/1/1 & 1/1/1 & 1/1/1 \\
\end{tabular}
\end{center}
\end{table}
The numbers of Table \ref{tab:qn.schlecht} suggest that the original $Q_n$ is very unsuitable for change-point detection in small samples. For $n=60$, it strongly exceeds the size, and the power is actually lower under alternatives than under the null. 

In principle, when the data is large enough, the $Q_n^{0.25}$ is able to pick up scale changes in the continuous setting, i.e., the $1/4$th quantile of all pairwise differences is larger than the $1/4$th quantile of all pairs of the first half of the observations. For $n=500$, this difference is sufficiently pronounced so that the test more or less works satisfactorily. However, for $n=60$, this difference is relatively small compared to the increased long-run variance (by which the test statistic is divided) when $\lambda$ is large. This largely accounts for the decrease of power as $\lambda$ increases. Additionally, the $Q_n$-based test grossly exceeds the size under the null. This can be largely attributed to a general bad ``small sample behavior'' of the test statistic. We observe a similar effect for the MAD-based test, cf.\ Section \ref{section_simulation_results} and the median-based change-point test for location considered by \citet{VW14}. Thus the $Q_n^{0.25}$ is particularly unsuitable for a quick detection of strong changes.

The problems can be overcome by considering $Q_n^\alpha$ for larger values of $\alpha$ than $1/4$, and using such a scale estimator indeed leads to a workable change-point test also for $n = 60$.
%
%
%
%
\begin{figure}
\centering
\includegraphics[width=\textwidth]{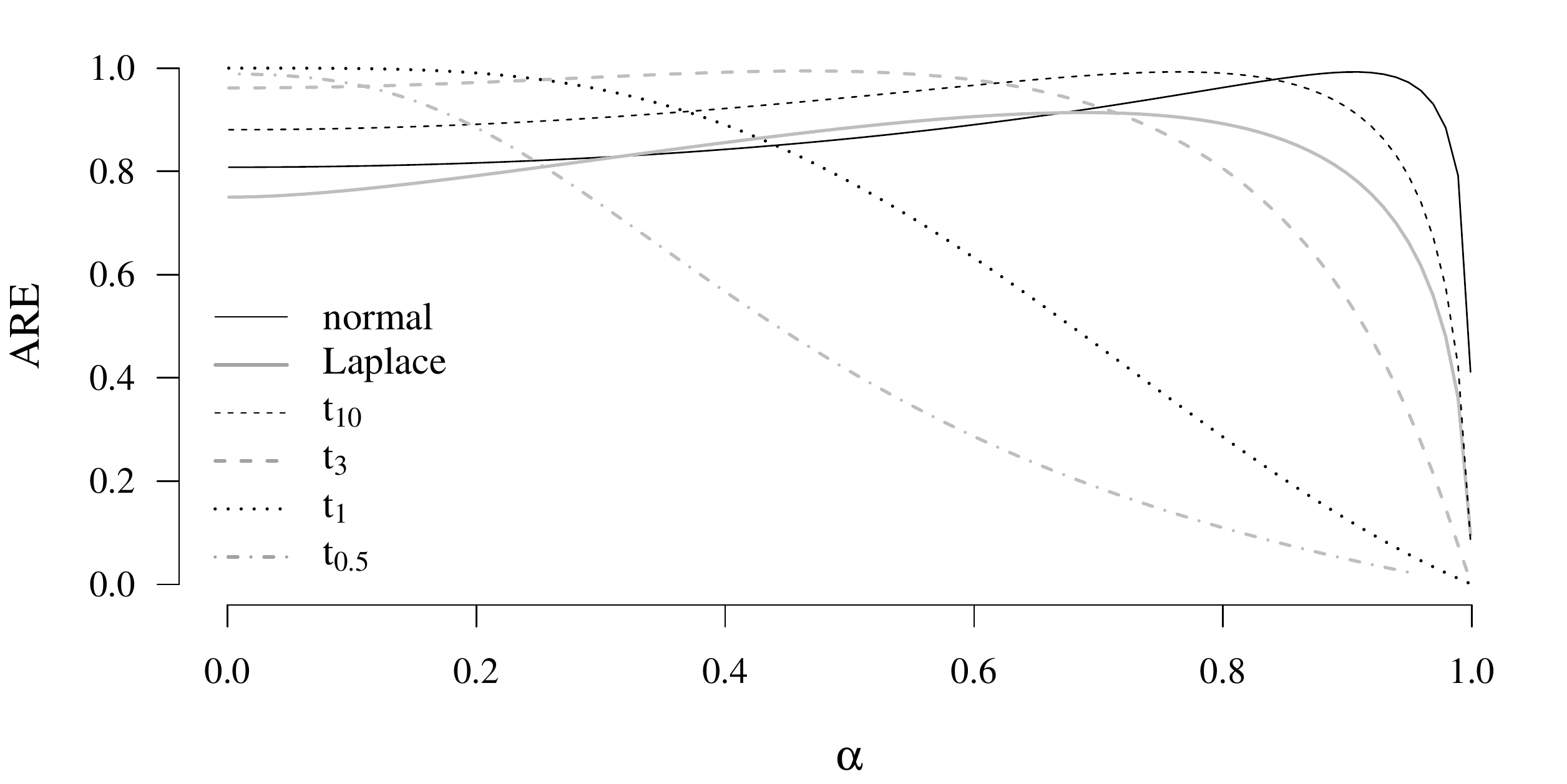}
\caption{Asymptotic relative efficiencies (AREs) of $Q_n^{\alpha} $ at normal, Laplace, and several $t$ distributions wrt respective maximum-likelihood estimators of scale.\label{plot_qn_are_alpha}}
\end{figure}
As a guideline for a suitable choice of $\alpha$, we may look at the asymptotic efficiencies. Figure \ref{plot_qn_are_alpha} plots the asymptotic relative efficiency of $Q_n^\alpha$ at several scale families with respect to the respective maximum-likelihood estimator for scale. The solid line, showing the asymptotic relative efficiency of $Q_n^\alpha$ with respect to the standard deviation at normality, is also depicted in \citet{Rousseeuw_Croux_1992}. Since we are also interested in efficiency at heavy-tailed distributions, we further include several members of the $t_\nu$-family (for $\nu = 1/2, 1, 3, 10$) and the Laplace distribution, cf.~Table~\ref{tab:distributions}. The mathematical derivations for this plot, i.e., the asymptotic variances of the $Q_n^\alpha$ and the MLE of the scale parameter of the $t$-distribution, are given in Appendix \ref{app_supp}.

At normality, the $Q_n^{\alpha}$ is asymptotically most efficient for $\alpha = 0.9056$ with an asymptotic relative efficiency of 99\% with respect to the standard deviation, but it shows a very slow decay as $\alpha$ decreases to zero. For the $t$-distributions considered, the maximal asymptotic efficiency of the $Q_n^\alpha$ is achieved for smaller values of $\alpha$, e.g., for the $t_3$ distribution, the optimum is attained at $\alpha = 0.4661$. 
However, within the range $0 < \alpha < 0.73$, the asymptotic relative efficiency with respect to the maximum-likelihood estimator is above 90\%. The $t_\nu$-distributions with $\nu = 1/2$ and $\nu = 1$, also depicted in Figure~\ref{plot_qn_are_alpha}, are extremely heavy-tailed.  We consider $t_\nu$ distributions with $\nu= 3$ and $\nu = 5$ as more realistic data models, and these are also included in the simulation results presented in Section \ref{section_simulation_results}.

Altogether, Figure \ref{plot_qn_are_alpha} suggests that $\alpha \approx 3/4$ may be a suitable choice as far as asymptotic efficiency is concerned.
We ran simulations with many different values of $\alpha$ and found that $Q_n^\alpha$ generally
performed best within that the range $0.7 < \alpha < 0.9$. In the tables in Section \ref{section_simulation_results}, we report results for $\alpha = 0.8$.


\section{Simulation Results}
\label{section_simulation_results}

In a simulation study we want to investigate the empirical size and power of the change-point tests based on the test statistics
introduced in Section \ref{subsec:teststatistics}. We consider the following general model: We assume the process $(Y_i)_{i \in\Z}$ to follow an $AR(1)$ process
\begin{equation*}
Y_i = \rho Y_{i-1}+\epsilon_i, \qquad i \in \Z,
\end{equation*}
for some $-1  < \rho < 1$
where the $\epsilon_i$ are an i.i.d.\ sequence with a mean-centered distribution. Then the data $X_1, \ldots, X_n$ are generated as
\begin{equation}
X_i = \begin{cases}
				Y_i,	& \quad 1 \leq i \leq \lfloor \theta n \rfloor, \\
\lambda Y_i,	& \quad \lfloor \theta n \rfloor < i \leq n,
\end{cases}
\end{equation}
for some $0 < \theta < 1$. Thus we have four parameters, $\lambda$, $\theta$, $\rho$, and $n$, which regulate the size of the change, the location of the change, the degree of the serial dependence of the sequence and the sample size, respectively. The tables below report results for $\lambda = 1, 1.5, 2$, $\theta = 1/4, 1/2, 3/4$, $\rho = 0, 0.8$ and $n = 60, 120, 240, 500$. A fifth ``parameter'' is the distribution of the $\epsilon_i$, for which we consider the standard normal $N(0,1)$, the standard Laplace $L(0,1)$, the normal scale mixture $NM(\gamma, \varepsilon)$ with $\gamma = 3$, $\varepsilon = 0.01$ and $t_\nu$-distributions with $\nu = 3$ and $\nu = 5$. The densities of these distributions are summarized in Table~\ref{tab:distributions}. The latter four of them are, to varying degrees, heavier-tailed than the normal. 
\begin{table}
\caption{
		Densities of innovation distributions considered in the simulation set-up. $B(\cdot, \cdot)$ denotes the beta function.
		\label{tab:distributions}
}
\small
\renewcommand{\arraystretch}{2.5}
\centering
\begin{tabular}{p{0.17\textwidth}|c|p{0.11\textwidth}|c}
\hline
distribution & density $f(x)$ & parameters & kurtosis\\
\hline
		normal $N(\mu,\sigma^2)$ & 
		$\ds \frac{1}{\sqrt{2\pi \sigma^2} } \exp\left\{ - \frac{(x-\mu)^2}{2 \sigma^2} \right\}$ &
		 $\mu \in \R$, \newline \centering $\sigma^2 > 0$  &
		0 \\
\hline	 
		normal mixture \newline $NM(\gamma,\epsilon)$ & 
		$\ds \frac{\epsilon}{\sqrt{2\pi}\gamma}\exp\left\{\frac{-x^2}{2\gamma^2}\right\}+\frac{1-\epsilon}{\sqrt{2\pi}}\exp\left\{\frac{-x^2}{2}\right\}$ &
		$0 \le \epsilon \le 1$, \newline \centering $\gamma \ge 1$ &
		$\ds \frac{3 \epsilon(1-\epsilon) (\lambda^2-1)^2}{(\epsilon \lambda^2 +1 - \epsilon)^2}$ \\[0.5ex]		
\hline		
		Laplace $L(\mu,\alpha)$ & 	
		$\ds \frac{1}{2\alpha}\exp\left\{\frac{-|x-\mu|}{\alpha}\right\}$ &
		\centering  $\mu \in \R$, \newline $\alpha >0$ &
		3 \\
\hline		
		$t_\nu$ & 	
		$\ds \sqrt{\nu} B(\nu/2,1/2) \left( 1 + x^2/\nu\right)^{-\frac{\nu+1}{2}}$  &
		\centering $\nu > 0 $ & 
		$\ds \frac{6}{\nu-4}$ ($\nu > 4$) \\[0.5ex]
\hline	 
\end{tabular}
\end{table}
The normal mixture distribution captures the notion that the majority of the data stems from the normal distribution, except for some small fraction, which stems from a normal distribution with a $\gamma$ times larger standard deviation. This type of contamination model has been popularized by \citet{Tukey_1960}, who argues that $\gamma = 3$ is a realistic choice in practice, and furthers pointed out that in this case the mean deviation $d_n$ is more efficient scale estimator than the standard deviation for values of $\epsilon$ as low as 1\%.
Concerning the long-run variance estimation, we take the following choices for bandwidths and kernels, 
\be \label{eq:choices}
	K(t) = \frac{3}{4} (1- t^2) \varind{[-1,1]}(t), 
	\ \ 
	W(t) = (1- t^2)^2 \varind{[-1,1]}(t),
  \  \
	h_n = I_n\, n^{-1/3}, 
	\ \ 
	b_n = 2 n^{1/3},
\ee
where $I_n$ denotes the sample interquartile range of the data points the kernel density estimator is applied to. The kernel $K$ above is known as Epanechnikov kernel, and $W$ as quartic kernel. The results of the long-run variance estimation generally differ very little with respect to the choice of kernels. 
The the HAC-bandwidth $b_n$ tends to have the largest influence. We fix it in our simulations to $b_n = 2 n^{1/3}$, which is large enough to account for the strong serial dependence setting of $\rho = 0.8$, but is certainly far from optimal in the independence setting, and the results can be improved upon by using one of the data-adaptive bandwidth selection methods outlined in Section \ref{subsec:bandwidth}. However, choosing the same, albeit fixed, bandwidth for all change-point tests allows a fair comparison and puts the focus on the impact of the different estimators. 


\begin{table}[t]
\small
\caption{\label{table.size} 
   {\it Test size.} Rejection frequencies (\%)  of change-point tests based on six different scale estimators at stationary sequences. Asymptotic 5\% significance level; sample sizes $n = 60, 120, 240, 500$; five different innovation distributions. HAC bandwidth for long-run variance estimation: $b_n = 2n^{1/3}$.
}
\smallskip
\centering
\begin{tabular}{c|@{\,\,\,}r@{\,\,\,}r@{\,\,\,}r@{\,\,\,}r@{\,\,\,}r@{\,\,\,}r|r@{\,\,\,}r@{\,\,\,}r@{\,\,\,}r@{\,\,\,}r@{\,\,\,}r}
\hline
					  	& \multicolumn{6}{c}{independence} & \multicolumn{6}{|c}{$AR(1)$ with $\rho=0.8$}\\\hline
 Estimator	&  Var & MD &   GMD  & MAD& $Q_n$ & $Q_n^{0.8}$ &  Var & MD & GMD & MAD& $Q_n$ & $Q_n^{0.8}$\\\hline\hline
 \multicolumn{1}{c}{ $n=60$	}	&	\multicolumn{6}{c}{}	&	\multicolumn{6}{c}{} \\\hline
$N(0,1)$&		3&5&3&27&43&6 	&8&7&13&24&15&14  \\
$L(0,1)$&		3&4&3&23&38&8	&7&6&11&23&12&12\\
$NM(3,0.01)$&	4&6&4&26&46&7	&6&6&11&22&14&10\\
$t_5$&			3&4&1&24&40&8  	&7&6&11&23&13&13\\ 
$t_3$&			2&5&2&27&43&11	&4&4&8&24&12&11\\\hline\hline
 \multicolumn{1}{c}{  $n=120$	}	&	\multicolumn{6}{c}{}	&	\multicolumn{6}{c}{} \\\hline
$N(0,1)$&		3&3&3&21&29&4	&6&6&8&19&8&7\\   
$L(0,1)$&		2&2&2&15&23&5	&3&4&7&17&7&7\\
$NM(3,0.01)$&	2&2&2&19&27&4	&4&4&6&18&8&7\\
$t_5$&			1&2&2&19&23&4	&4&4&7&17&8&7\\     
$t_3$&			2&4&2&16&25&8	&3&3&5&19&6&6\\\hline\hline
 \multicolumn{1}{c}{  $n=240$	}	&	\multicolumn{6}{c}{}	&	\multicolumn{6}{c}{} \\\hline
$N(0,1)$&		2&3&2&15&11&2	&3&3&5&15&5&4\\   
$L(0,1)$&		2&3&2&12&10&3	&3&3&5&13&5&5\\
$NM(3,0.01)$&	2&3&3&14&11&4	&4&4&6&14&5&6\\
$t_5$&			3&4&3&13&12&3	&2&4&5&14&5&4\\    
$t_3$&			2&3&3&13&13&6	&2&3&4&15&4&5\\\hline\hline
  \multicolumn{1}{c}{ $n=500$	}	&	\multicolumn{6}{c}{}	&	\multicolumn{6}{c}{} \\\hline
$N(0,1)$&		3&3&3&10&5&4	&4&4&4&11&5&5\\   
$L(0,1)$&		3&5&4&9&6&5		&5&6&5&10&4&4\\ 
$NM(3,0.01)$&	2&3&3&12&6&3	&4&4&5&12&3&3\\
$t_5$&			3&4&3&11&7&4	&4&4&4&10&5&4\\     
$t_3$&			1&3&2&11&7&5	&3&4&3&12&4&4 \\
\hline
\end{tabular}
\end{table}

For each setting we generate 1000 repetitions. Tables~\ref{table.size}--\ref{table.power.dep.2} report empirical rejection frequencies (in \%) at the asymptotic 5\% significance level, i.e., we count how often the test statistics exceed 1.358, i.e., the 95\%-quantile of the limiting distribution of the studentized test statistics under the null.

\begin{table}[t]
\small
\caption{ \label{table.power.ind.1_5}
	{\it Test power.} Rejection frequencies (\%) at asymptotic 5\% level.  
	Change-point tests based on variance (Var), mean deviation (MD), Gini's mean difference (GMD), and $Q_n^{0.8}$.
	{\bf Independent observations};  scale changes by {\bf factor $\lambda = 1.5$}.
	Several sample sizes, marginal distributions, and change locations. HAC bandwidth: $b_n = 2n^{1/3}$.
}

\smallskip
\centering
\begin{tabular}{c|r@{\,\,\,}r@{\,\,\,}r@{\,\,\,}r|r@{\,\,\,}r@{\,\,\,}r@{\,\,\,}r|r@{\,\,\,}r@{\,\,\,}r@{\,\,\,}r} 
\hline
 Change location: & \multicolumn{4}{|c}{$[n/4]$}& \multicolumn{4}{|c}{$[n/2]$} & \multicolumn{4}{|c}{$[n3/4]$}\\\hline
 Estimator:	&  Var & MD & GMD & $Q_n^{0.8}$ &  Var & MD & GMD & $Q_n^{0.8}$ &  Var & MD & GMD & $Q_n^{0.8}$ \\\hline\hline
\multicolumn{1}{c}{$n=60$}		&	\multicolumn{4}{c}{}	&	\multicolumn{4}{c}{} &	\multicolumn{4}{c}{}	\\\hline
	$N(0,1)$		&4&3&10&11				&9&8&21&22			&5&4&8&9\\
	$L(0,1)$		&2&2&5&8				&4&4&10&14			&2&3&5&8\\ 
	$NM(3,0.01)$	&4&4&8&12				&4&4&10&20			&5&4&10&10\\
	$t_5$			&2&4&6&9				&4&6&11&12			&3&4&6&8\\     
	$t_3$ 			&1&2&3&8				&3&5&8&12			&3&4&4&10\\\hline\hline
 \multicolumn{1}{c}{$n=120$	}	&	\multicolumn{4}{c}{}	&	\multicolumn{4}{c}{} &	\multicolumn{4}{c}{}	\\\hline
$N(0,1)$		&10&12&22&22			&43&43&57&49		&26&21&34&24\\
    $L(0,1)$		&4&6&11&8				&13&21&27&21		&11&11&15&11\\ 
    $NM(3,0.01)$	&6&11&17&18				&30&37&47&46		&22&19&30&20\\
    $t_5$			&4&8&11&10				&18&28&34&28		&11&12&19&13\\     
    $t_3$ 			&2&5&7&7				&8&19&22&20			&7&11&13&10\\\hline\hline
 \multicolumn{1}{c}{$n=240$	}	&	\multicolumn{4}{c}{}	&	\multicolumn{4}{c}{} &	\multicolumn{4}{c}{}	\\\hline
$N(0,1)$		&36&46&60&57			&90&88&93&92		&72&64&77&69\\
    $L(0,1)$		&11&21&26&22			&44&62&65&56		&34&39&44&32\\ 
    $NM(3,0.01)$	&25&37&49&49			&72&86&89&89		&57&59&69&65\\
    $t_5$			&12&26&31&30			&48&73&74&70		&34&48&50&44\\     
    $t_3$ 			&5&18&18&20				&19&52&49&51		&15&32&32&32\\\hline\hline
 \multicolumn{1}{c}{$n=500$	}	&	\multicolumn{4}{c}{}	&	\multicolumn{4}{c}{} &	\multicolumn{4}{c}{}	\\\hline
$N(0,1)$		&92&94&97&96			&100&100&100&100	&99&97&98&99\\
    $L(0,1)$		&40&70&72&64			&88&96&95&94		&75&83&84&79\\ 
    $NM(3,0.01)$	&66&90&93&94			&95&100&100&100		&89&96&98&99\\
    $t_5$			&39&78&77&78			&85&99&98&99		&73&90&90&91\\     
    $t_3$ 			&10&53&46&57			&40&89&84&93		&34&70&64&74 \\
\hline
\end{tabular}
\end{table}

\begin{table}[t]
\small
\caption{ \label{table.power.ind.2}
  {\it Test power.} Rejection frequencies (\%) at asymptotic 5\% level.  
	Same set-up as Table~\ref{table.power.ind.1_5} except scale changes by {\bf factor $\lambda = 2.0$}. 
}
\smallskip
\centering
\begin{tabular}{c|r@{\,\,\,}r@{\,\,\,}r@{\,\,\,}r|r@{\,\,\,}r@{\,\,\,}r@{\,\,\,}r|r@{\,\,\,}r@{\,\,\,}r@{\,\,\,}r} 
\hline
Change location:  & \multicolumn{4}{|c}{$[n/4]$}& \multicolumn{4}{|c}{$[n/2]$} & \multicolumn{4}{|c}{$[n3/4]$}\\\hline
 Estimator:	&  Var & MD & GMD & $Q_n^{0.8}$ &  Var & MD & GMD & $Q_n^{0.8}$ &  Var & MD & GMD & $Q_n^{0.8}$ \\
		\hline\hline
		 \multicolumn{1}{c}{$n=60$	}	&	\multicolumn{4}{c}{}	&	\multicolumn{4}{c}{} &	\multicolumn{4}{c}{}	\\\hline
		$N(0,1)$		&2&4&18&21				&18&18&46&39		&7&5&14&9\\
		$L(0,1)$		&1&1&7&12				&7&8&21&22			&5&3&8&10\\ 
		$NM(3,0.01)$	&3&4&16&21				&15&16&40&32		&6&4&14&9\\
		$t_5$			&2&3&9&12				&8&12&29&24			&4&3&8&11\\     
		$t_3$ 			&1&2&5&10				&5&7&18&19			&4&3&8&12\\\hline\hline
		 \multicolumn{1}{c}{$n=120$	}	&	\multicolumn{4}{c}{}	&	\multicolumn{4}{c}{} &	\multicolumn{4}{c}{}	\\\hline
		$N(0,1)$		&17&27&53&52			&79&86&94&84		&58&57&75&40\\
		$L(0,1)$		&6&11&22&22				&34&58&68&47		&25&30&42&20\\ 
		$NM(3,0.01)$	&11&20&43&45			&65&80&88&81		&47&49&66&37\\
		$t_5$			&6&14&26&26				&43&68&75&63		&31&40&49&28\\     
		$t_3$ 			&3&8&14&16				&18&46&51&45		&15&29&35&22\\\hline\hline
		 \multicolumn{1}{c}{$n=240$	}	&	\multicolumn{4}{c}{}	&	\multicolumn{4}{c}{} &	\multicolumn{4}{c}{}	\\\hline
		$N(0,1)$		&72&89&97&94			&100&100&100&100	&99&99&100&98\\
		$L(0,1)$		&24&57&66&54			&86&98&98&94		&76&86&89&77\\ 
		$NM(3,0.01)$	&52&83&92&92			&94&100&100&100		&91&98&99&97\\
		$t_5$			&26&70&76&77			&85&99&99&98		&77&92&93&90\\     
		$t_3$ 			&9&45&45&49				&46&92&89&94		&43&76&75&78\\\hline\hline
		 \multicolumn{1}{c}{$n=500$	}	&	\multicolumn{4}{c}{}	&	\multicolumn{4}{c}{} &	\multicolumn{4}{c}{}	\\\hline
		$N(0,1)$		&100&100&100&100		&100&100&100&100	&100&100&100&100\\
		$L(0,1)$		&81&100&100&98			&100&100&100&100	&99&100&100&100\\ 
		$NM(3,0.01)$	&91&100&100&100			&100&100&100&100	&100&100&100&100\\
		$t_5$			&76&100&100&100			&98&100&100&100		&97&100&100&100\\     
		$t_3$ 			&30&94&89&98			&75&100&99&100		&72&98&98&100 \\
	  \hline
\end{tabular}
\end{table}                 

\paragraph{Analysis of size.} Table~\ref{table.size} reports rejection frequencies of the change-point tests based on the variance $\sigma_n^2$ (Var), the mean deviation $d_n$ (MD), Gini's mean difference $g_n$ (GMD), the median absolute deviation $m_n$ (MAD), the original $Q_n$ as considered by \citet{Rousseeuw_Croux_1993}, i.e., the $\binom{\lfloor n/2 \rfloor + 1}{2}$-th order statistic of
all pairwise differences, and the $Q_n^{0.8}$, i.e., the $0.8 \binom{n}{2}$-th order statistic of 
all pairwise differences.
We notice that the $Q_n$ and the MAD heavily exceed the size. This effect wears off as $n$ increases, but rather slowly. The $Q_n$ shows an acceptable size behavior for $n=500$, the MAD not even for this sample size. 
This behavior can be attributed to a discretization effect. 
A similar observations is made at the median-based change-point test for location, which is discussed in detail in \citet[][Section 4]{VW14}. Due to the size distortion, the MAD and the $Q_n$ are excluded from any further power analysis. The size exceedance also takes places to a lesser degree for the $Q_n^{0.8}$ at $n = 60$ and for GMD under dependence at $n = 60$. This must be taken into account when comparing the power under alternatives. Based on the simulation results for size and power, the $Q_n^{0.8}$ can be seen to provide a sensible change-point test for scale, but some caution should be taken for sample sizes below $n=100$.  All other tests keep the size for the situations considered.

\begin{table}[t]
\small
\caption{ \label{table.power.dep.1_5}
	{\it Test power.} Rejection frequencies (\%) at asymptotic 5\% level.  
	Change-point tests based on variance (Var), mean deviation (MD), Gini's mean difference (GMD), and $Q_n^{0.8}$.
	{\bf AR(1) process with $\rho=0.8$};  scale changes by {\bf factor $\lambda = 1.5$}.
	Several sample sizes, marginal distributions, and change locations. HAC bandwidth: $b_n = 2n^{1/3}$.
}

\smallskip
\centering
\begin{tabular}{c|r@{\,\,\,}r@{\,\,\,}r@{\,\,\,}r|r@{\,\,\,}r@{\,\,\,}r@{\,\,\,}r|r@{\,\,\,}r@{\,\,\,}r@{\,\,\,}r} 
\hline
Change location: & \multicolumn{4}{|c}{$[n/4]$}& \multicolumn{4}{|c}{$[n/2]$} & \multicolumn{4}{|c}{$[n3/4]$}\\\hline
 Estimator:	&  Var & MD & GMD & $Q_n^{0.8}$ &  Var & MD & GMD & $Q_n^{0.8}$ &  Var & MD & GMD & $Q_n^{0.8}$ \\\hline\hline
\multicolumn{1}{c}{$n=60$}		&	\multicolumn{4}{c}{}	&	\multicolumn{4}{c}{} &	\multicolumn{4}{c}{}	\\\hline
$N(0,1)$		&11&10&20&20			&17&13&30&25		&16&11&24&19\\
$L(0,1)$		&10&8&19&18				&11&8&21&20			&12&7&18&18\\
$NM(3,0.01)$	&9&9&19&20				&15&12&25&25		&15&8&23&20\\
$t_5$			&9&9&18&20				&14&12&23&24		&12&9&20&19\\    
$t_3$ 			&8&6&15&17				&10&10&21&23		&11&7&18&17\\\hline\hline
\multicolumn{1}{c}{$n=120$}		&	\multicolumn{4}{c}{}	&	\multicolumn{4}{c}{} &	\multicolumn{4}{c}{}	\\\hline
$N(0,1)$		&8&10&18&19				&20&22&36&30		&16&14&25&19\\
$L(0,1)$		&7&8&15&15				&18&17&30&24		&12&11&21&13\\
$NM(3,0.01)$	&10&11&18&18			&18&20&32&24		&15&14&24&16\\
$t_5$			&8&8&16&15				&15&16&28&22		&14&11&22&16\\    
$t_3$ 			&5&6&12&11				&11&14&22&20		&8&7&16&14\\\hline\hline
\multicolumn{1}{c}{$n=240$}		&	\multicolumn{4}{c}{}	&	\multicolumn{4}{c}{} &	\multicolumn{4}{c}{}	\\\hline
$N(0,1)$		&12&16&26&23			&32&36&49&43		&27&24&34&28\\
$L(0,1)$		&9&12&20&17				&27&30&40&32		&20&19&27&22\\ 
$NM(3,0.01)$	&10&14&22&22			&29&32&44&40		&24&22&34&26\\
$t_5$			&8&11&18&18				&26&30&40&32		&20&21&30&23\\    
$t_3$ 			&4&8&13&12				&15&22&29&22		&14&14&21&18\\\hline\hline
\multicolumn{1}{c}{$n=500$}		&	\multicolumn{4}{c}{}	&	\multicolumn{4}{c}{} &	\multicolumn{4}{c}{}	\\\hline
$N(0,1)$		&25&32&42&41			&72&73&80&74		&56&51&60&54\\
$L(0,1)$		&18&27&34&32			&53&60&65&63		&43&40&47&42\\ 
$NM(3,0.01)$	&23&33&43&38			&64&68&75&71		&50&48&56&51\\
$t_5$			&17&28&36&29			&52&61&67&64		&41&40&46&45\\    
$t_3$ 			&8&18&22&21				&30&45&48&45		&25&32&36&31\\
\hline
\end{tabular}
\end{table}

\paragraph{Analysis of power.} Tables \ref{table.power.ind.1_5}--\ref{table.power.dep.2} lists empirical rejection frequencies under various alternatives. Tables \ref{table.power.ind.1_5} and \ref{table.power.ind.2} show results for serial independence ($\rho=0$) for change-sizes $\lambda = 1.5$ and $\lambda=2$, respectively. Tables \ref{table.power.dep.1_5} and \ref{table.power.dep.2} contain figures for the serial dependence setting ($\rho=0.8$), again for change-sizes $\lambda = 1.5$ and $\lambda=2$, respectively. All tables show results for $\theta = 1/4, 1/2$, and $3/4$ and $n = 60, 120, 240$, and 500. We make the following observations:
\begin{enumerate}[(1)]
\item
All test have better power at independent sequences than dependent sequences. Note that $\rho = 0.8$ constitutes a scenario of rather strong serial dependence. 
\item
All tests loose power as the tails of the innovation distribution increase, but the loss is much more pronounced for the variance than for the other estimators.  The distributions listed in the tables are in ascending order according to their kurtosis. The kurtoses of $N(0,1)$, $NM(3,1\%)$, $L(0,1)$, $t_5$, and $t_3$ are 0, 1.63, 3, 6, and $\infty$, respectively. General formulae are given in Table \ref{tab:distributions}. 
\item
The test generally have a higher power for $\theta=3/4$ than for $\theta=1/4$. This may seem odd at first glance since in both cases the change occurs equally far away from the center of the sequence. However, since we always consider changes from a smaller ($\lambda =1$) to a larger scale ($\lambda = 1.5$ or $2$), a sequence with $\theta=1/4$ has a higher overall variability and hence yields a larger long-run variance estimate than a sequence with $\theta = 3/4$. Since we divide the test statistics by the (the root of the) long-run variance estimates, this implies a difference in power.
\item
The GMD-based test turns out to have the overall best performance, with the $Q_n^{0.8}$ and also MD not trailing far behind. In the independence setting, the $Q_n^{0.8}$ is the best for the $t_3$ distribution. 
\item
It is interesting to note that all the competitors, GMD, $Q_{0.8}$, and MD, dominate the variance-based test also under normality. The explanation is that a multiplicative change in scale, as we consider here, tends to blow up the long-run variance estimate $\hat{D}_{\sigma^2}$ much more than the corresponding estimates for the other estimators. To illustrate this, consider a sequence of i.i.d.\ $N(0,1)$-variates $X_1, \ldots, X_{\lfloor n/2 \rfloor}$ followed by a sequence of i.i.d.\ $N(0,4)$-variates $X_{\lfloor n/2 \rfloor + 1}, \ldots, X_n$. For large $n$, the quantity that $\hat{D}^2_{\sigma^2}$ estimates when applied to $X_1, \ldots, X_n$ can be seen to be $ASV(\sigma^2_n; NM(\gamma=2, \epsilon=1/2)) = E(Y^4) - E(Y^2)^2$ for $Y \sim NM(\gamma=2, \epsilon=1/2)$, whereas $\hat{D}^2_{d}$ estimates $ASV(d_n; NM(2, 1/2)) = E(Y^2) - (E|Y|)^2$ with $Y$ as before. Compared to a stationary sequence of $N(0,1)$-variates, the former quantity is blown up by the factor 19.25, the latter only by 2.93.
\end{enumerate}

\begin{table}[t]
\small
\caption{ \label{table.power.dep.2}
	  {\it Test power.} Rejection frequencies (\%) at asymptotic 5\% level.  
	Same set-up as Table~\ref{table.power.dep.1_5} except scale changes by {\bf factor $\lambda = 2.0$}. 
}

\smallskip
\centering
\begin{tabular}{c|r@{\,\,\,}r@{\,\,\,}r@{\,\,\,}r|r@{\,\,\,}r@{\,\,\,}r@{\,\,\,}r|r@{\,\,\,}r@{\,\,\,}r@{\,\,\,}r} 
\hline
Change location: & \multicolumn{4}{|c}{$[n/4]$}& \multicolumn{4}{|c}{$[n/2]$} & \multicolumn{4}{|c}{$[n3/4]$}\\\hline
 Estimator:	&  Var & MD & GMD & $Q_n^{0.8}$ &  Var & MD & GMD & $Q_n^{0.8}$ &  Var & MD & GMD & $Q_n^{0.8}$ \\\hline\hline
\multicolumn{1}{c}{ $n=60$}		&	\multicolumn{4}{c}{}	&	\multicolumn{4}{c}{} &	\multicolumn{4}{c}{}	\\\hline
$N(0,1)$		&13&11&27&26			&24&18&41&39		&16&10&27&22\\
$L(0,1)$		&14&12&26&21			&20&15&38&34		&16&8&25&20\\ 
$NM(3,0.01)$	&12&11&27&24			&22&16&41&39		&18&10&29&20\\
$t_5$			&10&10&22&25			&18&14&36&34		&15&8&25&18\\   
$t_3$ 			&8&8&18&21				&15&12&31&29		&15&9&23&21\\\hline\hline
\multicolumn{1}{c}{$n=120$}		&	\multicolumn{4}{c}{}	&	\multicolumn{4}{c}{} &	\multicolumn{4}{c}{}	\\\hline
$N(0,1)$		&12&16&30&29			&36&38&59&51		&28&25&41&26\\
$L(0,1)$		&10&11&24&23			&26&31&50&38		&25&21&38&24\\ 
$NM(3,0.01)$	&11&13&27&26			&33&38&56&47		&27&22&40&24\\
$t_5$			&10&12&25&24			&26&29&48&41		&24&20&37&24\\     
$t_3$ 			&7&8&18&19				&20&24&40&33		&15&14&28&18\\\hline\hline
\multicolumn{1}{c}{$n=240$	}	&	\multicolumn{4}{c}{}	&	\multicolumn{4}{c}{} &	\multicolumn{4}{c}{}	\\\hline
$N(0,1)$		&20&32&52&44			&70&77&87&80		&63&60&74&54\\
$L(0,1)$		&14&22&37&37			&58&70&79&66		&52&51&64&42\\ 
$NM(3,0.01)$	&16&29&45&43			&67&78&85&74		&57&55&68&51\\
$t_5$			&14&24&36&35			&54&68&78&67		&48&49&60&44\\     
$t_3$ 			&8&17&26&23				&33&49&59&54		&35&37&47&30\\\hline\hline
\multicolumn{1}{c}{$n=500$}		&	\multicolumn{4}{c}{}	&	\multicolumn{4}{c}{} &	\multicolumn{4}{c}{}	\\\hline
$N(0,1)$		&56&80&88&85			&98&99&100&99		&95&94&97&96\\
$L(0,1)$		&37&64&74&70			&92&97&97&96		&88&89&93&87\\ 
$NM(3,0.01)$	&45&76&83&77			&95&99&99&99		&91&91&94&93\\
$t_5$			&34&65&74&73			&88&96&97&98		&84&87&90&86\\     
$t_3$ 			&18&45&51&50			&64&88&88&87		&60&72&76&67\\
\hline
\end{tabular}
\end{table}


\section{Data Example}
\label{section_data_example}

We consider two data examples: a hydrological and a financial time series. According to our knowledge, neither has been analyzed in a change-point context. 
\begin{figure}[t]
\includegraphics[width=\textwidth]{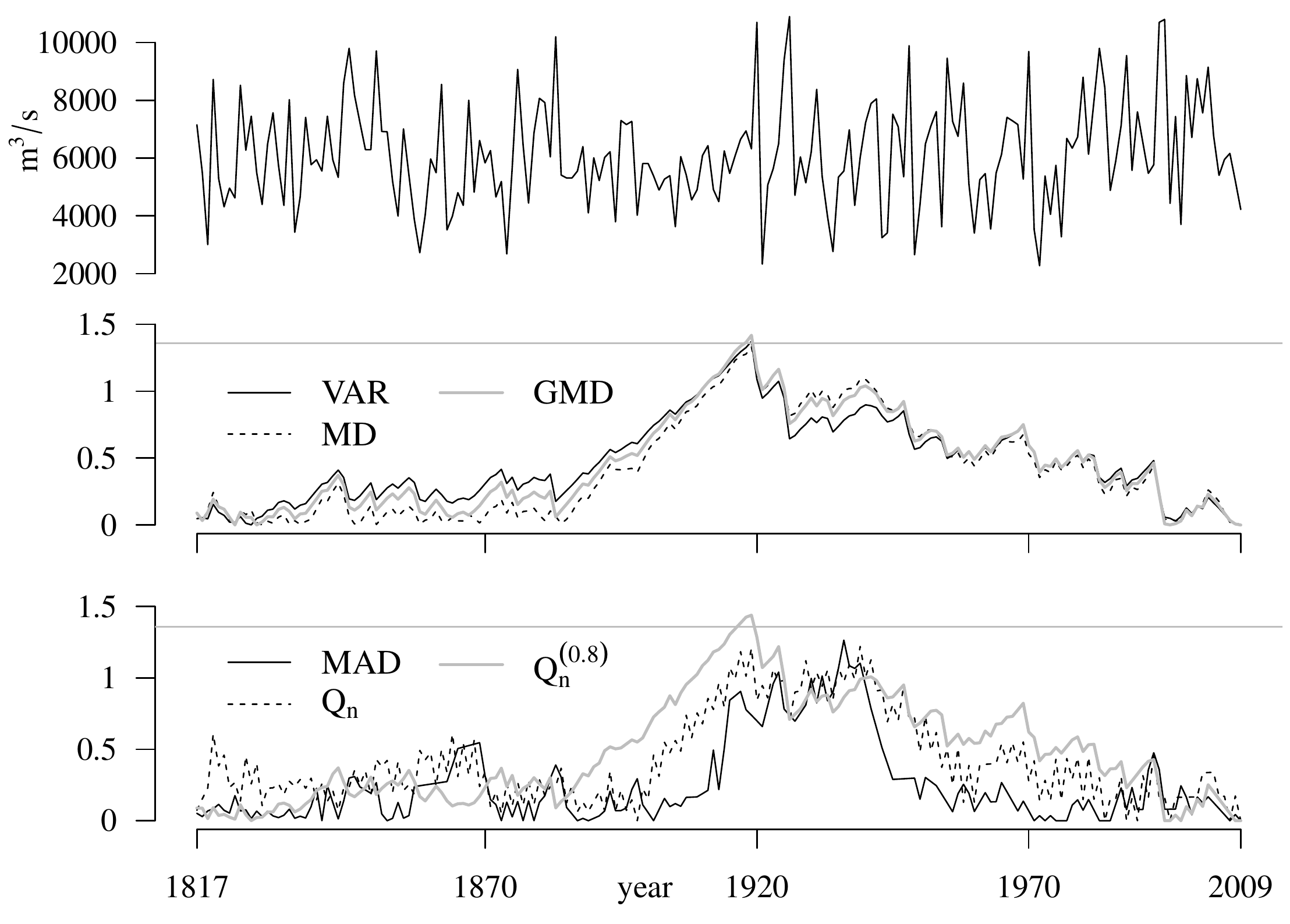}
\caption{ \label{data_tests_example_hydrology}
	Top row: annual maximum discharge ($m^3/s$) of the river Rhine at Cologne, Germany, between 1817 and 2009.
	Middle and bottom rows: change-point processes 
	 $\hat{D}^{-1}_s \left( k/\sqrt{n} |s_{1:k}-s_{1:n}|\right)_{2\le k \le n}$
	for estimators $s_n = \sigma^2_n$, $d_n$, $g_n$ (middle) and $m_n$, $Q_n$, and $Q_n^{0.8}$ (bottom); HAC kernel bandwidth: $b_n = 4$.
}
\end{figure}
\begin{figure}
\includegraphics[width=\textwidth]{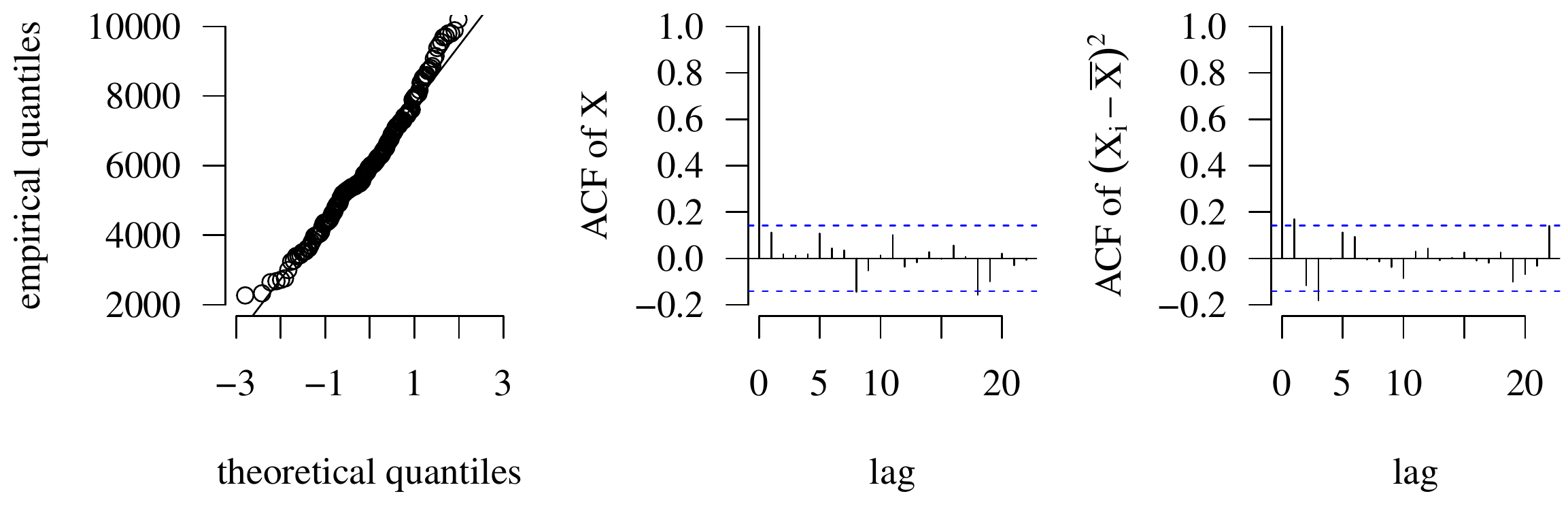}
\caption{\label{qq_plot_example_hydrology}
Hydrology data from Figure \ref{data_tests_example_hydrology}: Normal q-q plot, ACF of data, and ACF of squared centered data.
}
\end{figure}
The first data set consists of the annual maximum discharge (in cubic meters per second) of the river Rhine at Cologne, Germany, in the years 1817 to 2009 ($n=193$). The time series is plotted in Figure \ref{data_tests_example_hydrology}, top row. Figure \ref{qq_plot_example_hydrology} depicts a normal q-q plot, which reveals that the marginal distribution is fairly normal. 
Furthermore, the autocorrelation function of the data and the squared mean-centered data are plotted. They indicate a weak serial dependence. We thus apply the change-point tests from the previous section with long-run variance estimation settings as before except we set the HAC-bandwidth to $b_n = 4$. The latter is in consistency with the other data example. The results are very similar for smaller bandwidths.

The change-point test processes $\hat{D}^{-1}_s \left( k/\sqrt{n} |s_{1:k}-s_{1:n}| \right)_{2\le k \le n}$ show a fair agreement for $s_n = \sigma^2_n, d_n, g_n$, and $Q_n^{0.8}$, cf.\ Figure \ref{data_tests_example_hydrology}, middle and bottom row. All attain their maxima at 1919 with p-values ranging from 0.021 to 0.046, i.e., they confirm the existence of a change in scale around 1920, which is suggested by a visual inspection of the series.
This change coincides with the implementation a variety of structural river works upstream from Cologne, particularly along the Rhine's tributaries Main and Neckar in the early 1920s.
For illustration purposes, we also plot the corresponding curves for the original $Q_n$ ($\alpha = 1/4$) and the MAD in the bottom row of Figure \ref{data_tests_example_hydrology}. Both are very rugged, distinctively different from the other curves, and yield p-values above 5\%.

\begin{figure}[t]
\includegraphics[width=\textwidth]{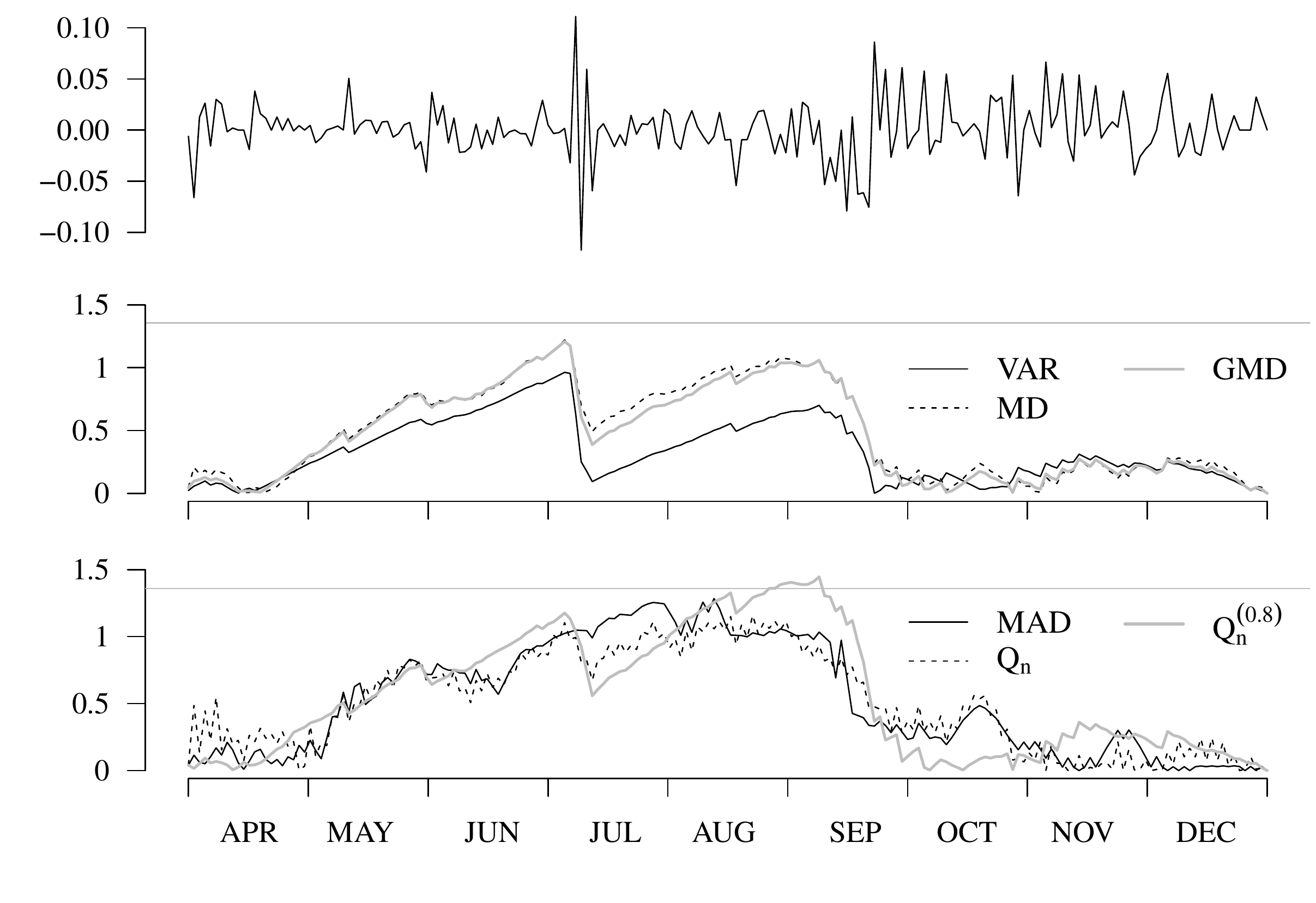}
\caption{ \label{data_tests_example_VW}
	Top row: daily log returns of VW share from April to December 2001.
	Middle and bottom rows: change-point processes 
	 $\hat{D}^{-1}_s \left( k/\sqrt{n} |s_{1:k}-s_{1:n}|\right)_{2\le k \le n}$
	for estimators $s_n = \sigma^2_n$, $d_n$, $g_n$ (middle) and $m_n$, $Q_n$, and $Q_n^{0.8}$ (bottom); HAC kernel bandwidth: $b_n = 4$.
} 
\end{figure}

\begin{figure}[ht]
\includegraphics[width=\textwidth]{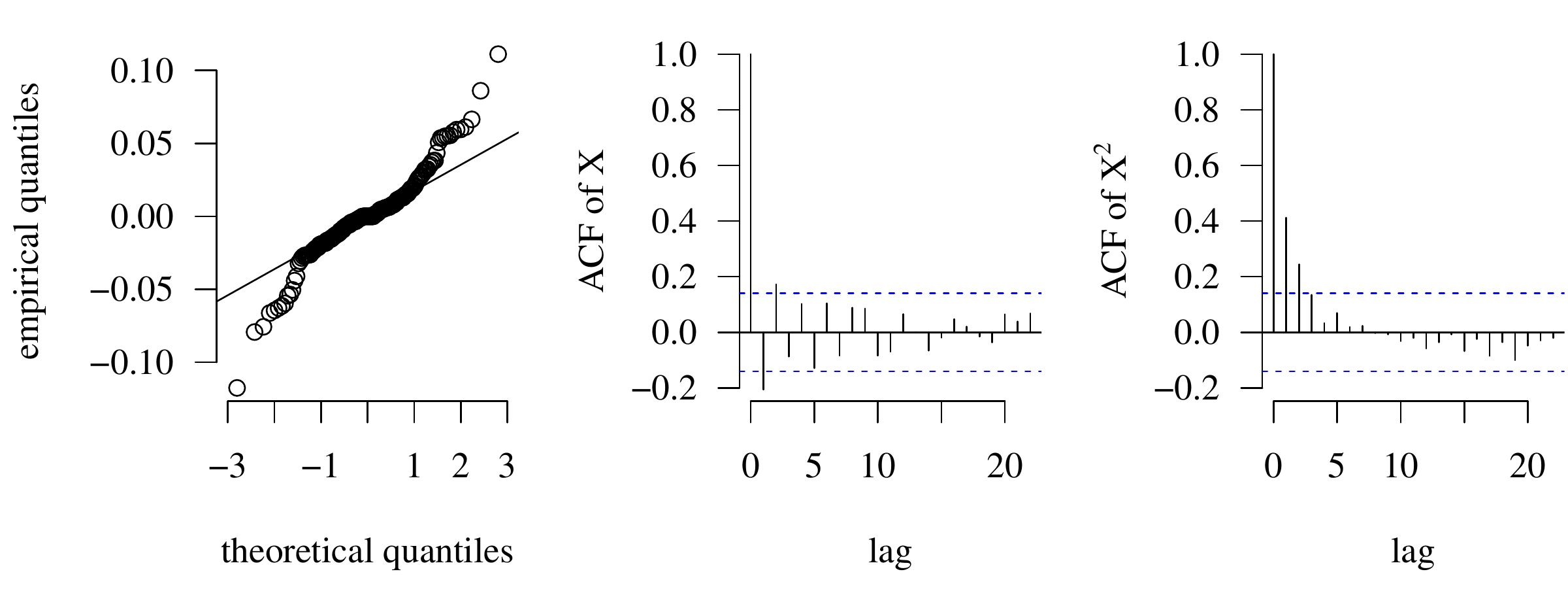}
\caption{ \label{qq_plot_example_VW}
 Financial data from Figure \ref{data_tests_example_VW}: Normal q-q plot, ACF of data, and ACF of squared data.
}
\end{figure}

The second example consists of the log returns of the daily closings of the Volkswagen share, traded at the German stock exchange in Frankfurt, within the last three quarters of the year 2001 ($n=196$). The impact of 9/11 on the volatility of the series is clearly visible, cf.\ Figure \ref{data_tests_example_VW}, top row. 
In the ``normal'' situation of the first example, all tests  were in fair agreement (except the previously discarded MAD and $Q_n$). This example differs in a variety of features, which shall illustrate the differences of the tests. The normal q-q plot shows heavier than normal tails, and the auto-correlation function of the squared sequence reveals some serial dependence (cf.\ Figure \ref{qq_plot_example_VW}). Furthermore, there is a short period of strong oscillation from July 10--12, 2001 (the reason for which is not known to us). 
Removing these three dates from the series, all tests consistently detect a change with p-values of 1\% and less and a place it at the beginning of September. However, \emph{with} those three days, the variance-based change-point test process $\hat{D}^{-1}_{\sigma^2} \left( k/\sqrt{n} |\sigma^2_{1:k}-\sigma^2_{1:n}| \right)_{2\le k \le n}$ attains its maximum at July 5 and yields a p-value of 0.31.
This is an example where few outliers mask an apparent change. The tests based on MD and GMD behave in principle similarly, but provide some mild evidence for a change with p-values of about 0.10. The $Q_n^{0.8}$, however, gives a p-value of 0.03, and the maximum is attained at September 7.

The curves for the MAD and $Q_n$ are plotted as well. Both curves, again, look distinctively different from the others, and the respective tests do not provide strong evidence for a change. The tests were carried out, as before, with $b_n = 4$ and the other long-run variance estimation parameters as in Section \ref{section_simulation_results}. All simulations and data analyses were performed in R \citep{R}. 



\section{Conclusion and Outlook}\label{section_conclusion_and_outlook}

We have studied the problem of detecting changes in scale beyond the established sum-of-squares methodology. We have considered test statistics based on alternative scale measures, which have a better outlier-resistance and and a better efficiency at heavy-tailed distributions than the sample variance: the mean deviation, the median absolute deviation (MAD), Gini's mean difference, and the $Q_n^\alpha$. The MAD and the original $Q_n$ (i.e., $Q_n^\alpha$ with $\alpha = 1/4$) may be confidently discarded for the purpose of change-point detecting, whereas the mean deviation, Gini's mean difference, and the $Q_n^\alpha$ for $0.7 \le \alpha \le 0.9$ provide very good alternatives, that improve upon the classical test not only under heavy tails but also under normality. We found Gini's mean difference and the $Q_n^\alpha$, which are both based on pairwise differences, to altogether outperform the mean deviation. Our general recommendation is to use the Gini's mean difference based test in case of normal or near-normal data situations, and to use the $Q_n^{0.8}$ or $Q_n^{0.75}$ if the occurrence of gross errors is suspected.

We have confined our investigation here to a number of popular and explicit scale estimators. The choice of the estimators is
based upon a prior assessment of the efficiency and robustness (where we understand the latter as ``high efficiency over a broad range of distributions'' rather than ``high breakdown-point'') of various scale estimators, and we suspect that the tests may not be substantially improved by plugging in other scale estimators. Nevertheless the question arises, what other alternative are possible and might be used for the problem at hand. 
For an overview of approaches to robust scale estimation we refer the reader to \citet[][Chapter 5]{Huber_2009}. For a numerical comparison of scale estimators, see, e.g., \citet{Lax1985}. 
One common way to safeguard against outliers is to use truncation or Winsorization. \citet{Lee_Park_2001} consider a version of the $\sigma^2$-based test based with truncated observations. Their simulation results suggest quite a substantial loss in power under normality, whereas we observe the opposite for our robustification approach. 
There is another conceptual advantage of pairwise differences: there is no location to estimate. At skewed distributions, taking, e.g., the mean or the median leads to distinctively different scale estimators. This ambiguity does not exist for pairwise-difference-based estimators.

Gini's mean difference and the $Q_n^\alpha$ are based on 
 the kernel $g(x,y) = |x-y|$ of order two. We have noted that both have a considerably higher efficiency under normality as compared to the respective estimators based on distances to the central location. So, may kernels of higher order even be better? A general observation seems to be that this is not the case, see, e.g., \citet[][Section 4]{Rousseeuw_Croux_1992}. Higher-order-kernels require a higher computational effort, but tend to provide neither better efficiency nor robustness.

A crucial part of all change-point tests for dependent data is the estimation of the long-run variance. We have proposed estimators  based on HAC kernel estimation, which is common in the change-point context. An alternative estimation technique is block sub-sampling, see, e.g., \citet{dehling:fried:sharipov:vogel:wornowizki:2013} and the references therein. An entirely different approach, which avoids any unknown scaling constants in the limit distribution of the test statistic is the self-normalization approach as proposed by \citet{shao:zhang:2010}.

Yet an alternative way of assessing critical values is to estimate the distribution of the test statistic using bootstrapping. A variety of bootstrap procedures have been proposed for dependent data, e.g., the block bootstrap \citep{kuensch:1989}, the stationary bootstrap \citep{politis:romano:1994}, the tapered block bootstrap \citep{Paparoditis.2001},  or the dependent wild (or multiplier) bootstrap \citep{shao:2010}.

The block bootstrap for U-statistics (such as Gini's mean difference) has been studied by \citet{dehling2010central}. Recently, \citet{buecher:kojadinovic:2016} showed the consistency of the dependent multiplier bootstrap for U-statistics and also established the validity of dependent multiplier bootstrap procedures for change-point test statistics based on this class of statistics. For quantiles, the block bootstrap was investigated by \citet{sun2006bootstrapping} and the multiplier bootstrap by \citet{doukhan2015dependent}. For U-quantiles, such as the $Q_n^\alpha$, we are unaware of any work concerning bootstrap methods. 

Finally, the consideration of high-breakdown-point estimators such as the MAD and the original $Q_n$, which have turned out to be rather unsuited for the problem at hand, prompts the question of what kind of robustness can be expected and is desired in a change-point context, and if this can be mathematically formalized and quantified. 


\section*{Acknowledgement}
The researcher were supported by the Collaborative Research Centre 823 \emph{Statistical modelling of nonlinear dynamic processes} and the Konrad-Adenauer-Stiftung. The authors thank Svenja Fischer for providing the river Rhine discharge data set and Marco Thiel for the stock exchange data set.


\appendix

\section{Supplementary material for Section \ref{section_qn}} 
\label{app_supp}

\subsection{The population value and the asymptotic variance of $Q_n^\alpha$}
\label{app_supp_1}

\paragraph{General expressions.} Let $X,Y \sim F$ be independent and let $F$ possess a Lebesgue density $f$. Generally, $Q^\alpha = Q^\alpha(F) = U^{-1}(\alpha)$, where $U$ is the distribution function of $|X-Y|$ and $U^{-1}$ the corresponding quantile function, i.e., $U^{-1}(\alpha) = \inf \{ t \,|\, \alpha \le U(t) \}$.

For the asymptotic variance of the $Q_n^\alpha$, we analyze (\ref{eq:lrv.qn}) for independent sequences $X_1,\ldots, X_n \sim F$:
\[
	ASV(Q_n^\alpha; F) = \frac{4 \, E \psi^2(X)}{u^2(Q^\alpha)}, 
	\qquad X \sim F,
\]
where
\[
	E\psi^2(X) = \int_R \left\{ F(x+Q^\alpha)- F(x-Q^\alpha) \right\}^2 f(x) d x - \alpha^2,
\]
and $u$ is the density of $|X-Y|$, i.e., 
\be \label{eq:u}
	u(x) = 2 \int_\R f(t)f(t-x) d t, \qquad x \ge 0. 
\ee
The corresponding cdf $U$ is then given by 
\be \label{eq:U}
	U(x) = \int_0^x u(y) dy, , \qquad x \ge 0. 
\ee

\paragraph{Specific expressions.} For distributions $F$, where the convolution with itself admits an tractable form, more explicit expressions for the above terms can be given. For instance, we have for the standard normal distribution $N(0,1)$ withd density $\phi$ and cdf $\Phi$:
\[
	Q^\alpha = \sqrt{2} \Phi^{-1}\left\{(\alpha+1)/2\right\}, \quad u(x) = \sqrt{2} \phi(x/\sqrt{2}), \ (x \ge 0).
\]
For the $t_1$ or Cauchy distribution we obtain
\[
 Q^\alpha = 2 \tan(\pi \alpha / 2), \quad u(x) = 4\pi^{-1}(x^2+4)^{-1}, \ (x \ge 0), 
\]
and 
\[
	ASV(Q_n^\alpha; t_1) = \frac{\pi}{4}\{(Q^\alpha)^2 +4\}^2 \left\{ \int_\R \frac{ \pi^2 [\arctan(x+Q^\alpha)-\arctan(x-Q^\alpha)]^2}{1+\x^2}  d x - \alpha^2  \right\}. 
\]
For the Laplace distribution $L(0,1)$, cf.\ Table~\ref{tab:distributions}, we get
\[
	u(x) = 1/2 (1+x) e^{-x}, (x \ge 0), \quad U(x) = 1 - e^{-x} - x/2 e^{-x}, (x\ge 0), 
\]
$Q^\alpha$ is obtained by numerically solving $U(x)= \alpha$, and 
\[
	ASV(Q_n^\alpha;L(0,1)) 
	= 16 \,
	\frac{1- (1+Q^\alpha) e^{-Q^\alpha} - 1/6\, e^{-Q^\alpha}\{ 1-e^{-Q^\alpha}\}^2 - \alpha^2}
	      { \left(1+Q^\alpha\right)^2 e^{- 2 Q^\alpha}}.
\]

\subsection{The asymptotic variance of the $t_\nu$ maximum-likelihood estimator for scale.} 

In Figure \ref{plot_qn_are_alpha}, the asymptotic relative efficiencies of the $Q_n^\alpha$ at various distributions with respect to the respective scale maximum likelihood estimators are displayed as a function of $\alpha$. The asymptotic variance of the normal scale MLE (the standard deviation) at $N(0,1)$  is 1/2, the asymptotic variance of the Laplace scale MLE (the mean deviation) at $L(0,1)$ is 1. Lemma \ref{lem:1} below yields the asymptotic variance of the $t_\nu$ scale MLE. The authors believe that this result is known in the literature, but were not able to find a reference for exactly this result and would be grateful for any information regarding an earlier reference.

Consider, for fixed $\nu > 0$, the $t_\nu$ scale family, i.e., the parametric family of densities $\{ f_{\nu,s} \,| \, s > 0 \}$ with 
\[
	f_{\nu,s}(x) = \frac{c_\nu}{s}\left( 1+ \frac{x^2}{\nu s^2}\right)^{-(\nu+1)/2}, \quad x \in \R, \qquad \mbox{ with } \ c_\nu = \sqrt{\nu}B(\nu/2,1/2),
\]
where $B$ denotes the beta function. Let $F_{\nu,s}$ denote the corresponding distribution function. Then, the maximum likelihood estimator $\hat{s}_n$ for $s$ at an i.i.d.\ sample $X_1, \ldots, X_n \sim F_{\nu,s}$ satisfies $\sqrt{n}(\hat{s}_n - s) \cid N(0, I(\nu,s)^{-1})$, where 
	 $I(\nu,s) = E\left[\left\{\partial/(\partial s) \log f_{\nu,s}(X)\right\}^2 \right]$
with $X \sim F_{\nu,s}$ denotes the Fisher information about $s$ contained in $X$.

\begin{lemma} $I(\nu,s) = 2\nu(\nu+3)^{-1}s^{-2}$. \label{lem:1}
\end{lemma}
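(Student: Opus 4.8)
The plan is to compute the expected squared score directly from the definition $I(\nu,s) = E[\{\partial_s \log f_{\nu,s}(X)\}^2]$ and then evaluate the resulting expectation by reducing it to moments of a Beta-distributed random variable. The $t_\nu$ scale family is smooth in $s$, so the expected-squared-score form (the one stated in the lemma) is legitimate and I would work with it throughout.

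First I would differentiate the log-density. Writing $u = x^2/(\nu s^2)$, a short calculation gives
\[
	\frac{\partial}{\partial s}\log f_{\nu,s}(x) = -\frac{1}{s} + \frac{\nu+1}{s}\cdot\frac{u}{1+u} = \frac{1}{s}\cdot\frac{\nu u - 1}{1+u}.
\]
Setting $T = X/s$, so that $u = T^2/\nu$, this equals $\frac{1}{s}\cdot\frac{\nu(T^2-1)}{\nu + T^2}$. Since $X \sim F_{\nu,s}$ is equivalent to $T = X/s \sim t_\nu = F_{\nu,1}$, the factor $s^{-1}$ pulls out as $s^{-2}$ and I obtain $I(\nu,s) = s^{-2}I(\nu,1)$ with
\[
	I(\nu,1) = E\left[\left(\frac{\nu(T^2-1)}{\nu+T^2}\right)^2\right], \qquad T \sim t_\nu.
\]
This already confirms the claimed $s^{-2}$ dependence and reduces everything to evaluating a single expectation under $t_\nu$.

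The key step is the substitution $V = \nu/(\nu + T^2)$. It is a standard fact that $V \sim \mathrm{Beta}(\nu/2,\,1/2)$ when $T \sim t_\nu$; I would justify this by the change of variables in the $t_\nu$ density (or, as a sanity check, by confirming $E[V] = \nu/(\nu+1)$ through a Beta-function integral). Rewriting the integrand via $\frac{\nu T^2}{\nu + T^2} = \nu(1-V)$ and $\frac{\nu}{\nu+T^2} = V$ collapses it to the polynomial $(\nu - (\nu+1)V)^2$, so that
\[
	I(\nu,1) = E\big[(\nu - (\nu+1)V)^2\big] = \nu^2 - 2\nu(\nu+1)E[V] + (\nu+1)^2 E[V^2].
\]

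Finally I would insert the $\mathrm{Beta}(\nu/2,1/2)$ moments $E[V] = \nu/(\nu+1)$ and $E[V^2] = \nu(\nu+2)/\{(\nu+1)(\nu+3)\}$. The first two terms combine to $-\nu^2$ and the third to $\nu(\nu+1)(\nu+2)/(\nu+3)$; putting these over the common denominator $\nu+3$ leaves the numerator $\nu(\nu+1)(\nu+2) - \nu^2(\nu+3) = 2\nu$, which yields $I(\nu,1) = 2\nu/(\nu+3)$ and hence $I(\nu,s) = 2\nu/\{(\nu+3)s^2\}$. There is no genuine obstacle in this argument; the only point requiring care is the Beta substitution, which is precisely what converts the awkward rational-function expectation into elementary polynomial moments. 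A direct integration using Beta-function identities would also succeed but is considerably messier, so I would prefer the substitution route.
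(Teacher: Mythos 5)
Your proof is correct, and it reaches the lemma by a genuinely different computational route than the paper's. Both arguments share the same skeleton: pull the scale out via $I(\nu,s)=s^{-2}I(\nu,1)$ and evaluate the expected squared score, which you correctly compute as $E\bigl[\{\nu(T^2-1)/(\nu+T^2)\}^2\bigr]$ with $T\sim t_\nu$. The key reduction then diverges. The paper rescales the integration variable by $\sqrt{(\nu+4)/\nu}$ so that the weight $(1+x^2/\nu)^{-(\nu+5)/2}$ becomes, up to constants, the $t_{\nu+4}$ density, and then invokes the second and fourth moments of $t_{\nu+4}$ together with Gamma-function identities for the ratio $c_\nu/c_{\nu+4}$. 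You instead substitute $V=\nu/(\nu+T^2)\sim\mathrm{Beta}(\nu/2,1/2)$, which collapses the rational score exactly to the polynomial $\nu-(\nu+1)V$, so only the first two Beta moments are needed; your algebra checks out, including the final cancellation $\nu(\nu+1)(\nu+2)-\nu^2(\nu+3)=2\nu$. What each approach buys: your Beta route eliminates all normalizing-constant bookkeeping (no Gamma identities, no $c_\nu/c_{\nu+4}$) and applies uniformly for every real $\nu>0$ since Beta moments always exist, whereas the paper's route stays inside the $t$ family and makes the index shift $\nu\mapsto\nu+4$ explicit, which is how it guarantees the required fourth moment exists; its intermediate displays, however, demand more careful tracking of constants. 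The one external fact you rely on, $\nu/(\nu+T^2)\sim\mathrm{Beta}(\nu/2,1/2)$, is standard and holds for all real $\nu>0$ (e.g., write $T^2/\nu=Z^2/W$ with $Z^2\sim\chi^2_1$ and $W\sim\chi^2_\nu$ independent, or verify directly by the density change of variables you mention), so there is no gap in your argument.
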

\begin{proof}
$I(\nu,s) = s^{-2} I(\nu,1)$
\[
	= \frac{c_\nu}{s^2} \int_\R \left(\frac{(\nu+1) x^2}{\nu+x^2} -1 \right)\left(1 + \frac{x^2}{\nu}\right)^{-\frac{\nu+1}{2}}  d x
	= \frac{c_\nu}{s^2} \int_\R (x^2-1)^2 \left(1 + \frac{x^2}{\nu}\right)^{-\frac{\nu+5}{2}} d x
\]
Substituting $y = \sqrt{(\nu+4)/\nu}\,x$, we transform this integral to 
\[
	\frac{c_\nu}{s^2} \sqrt{\frac{\nu}{\nu+4}} \int_\R 
	\left\{ \left(\frac{\nu}{\nu+4}\right)^2 y^4 - 2 \frac{\nu}{\nu+4} y^2  + 1 \right\} \frac{1}{c_{\nu+4}} f_{\nu+4,1}(y) d y
\]
For $Y \sim t_d$, we have $EY^4 = 3 d^2 (d-2)^{-1} (d-4)^{-1}$, $d > 4$, and $EY^2 = d/(d-2)$, $d > 2$. Applying this with $d = \nu + 4$, we obtain
\[
	s^{-2} I(\nu,1) = \frac{1}{s^2} \frac{c_\nu}{c_{\nu+4}} \sqrt{\frac{\nu}{\nu+1}}\, \frac{2(\nu+1)}{\nu+2}.
\]
With $c_\nu = \Gamma (\frac{\nu+1}{2})/(\sqrt{\nu \pi}\,\Gamma( \frac{\nu}{2}))$ we arrive at the expression given in Lemma \ref{lem:1}.
\end{proof}
Remark: If we consider (more realistically) the location-scale $t_\nu$-family, where location and scale are estimated simultaneously, the scale maximum-likelihood estimator $\hat{s}_n$ has the same asymptotic variance $(\nu+3)s^2/(2\nu)$ as computed above. Due to the symmetry of the distribution, the maximum-likelihood estimators of location and scale are uncorrelated by an invariance argument and the $2\times 2$ Fisher information matrix is diagonal. For a discussion on the uniqueness of the solutions to the $t_\nu$ likelihood equations in the location-and-scale case and the scale-only scale in the univariate as well as the multivariate setting, see \citet{kent:tyler:1991} and the references therein.


\section{Proofs}
\label{app_proofs}

Before stating the proofs of Theorems \ref{th:gmd} and \ref{th:qn}, we want to provide an intuitive explanation for the expressions (\ref{eq:lrv.gmd}) and (\ref{eq:lrv.qn}) of the long-run variances for Gini's mean difference and the $Q_n^\alpha$, respectively. 
Let $(X_i)_{i\in\Z}$ be a stationary sequence with marginal distribution $F$, and furthermore $X, Y \sim F$ be independent. 
Gini's mean difference is a U-statistic with kernel $g(x,y) = |x-y|$. The main tool for deriving asymptotics for U-statistics is the Hoeffding decomposition. By letting
\[
	g_0 = E g(X,Y), \quad g_1(x) = E g(x,Y) - g_0, \quad g_2(x,y) = g(x,y) - g_1(x) - g_1(y) - g_0, 
\]
we can decompose the U-statistic 
\[
  G_n = \binom{n}{2}^{-1} \sum_{1\le i < j \le n} g(X_i, X_j)
\]
into 
\[
 \sqrt{n}(G_n - g_0) = 
 \frac{2}{\sqrt{n}}\sum_{i=1}^{n}g_1(X_i) + 
 \frac{2}{\sqrt{n}(n-1)}\sum_{1\leq i<j\leq n}g_2(X_i,X_j).
\]
The first term on the right-hand side is called the linear part and the second term is the degenerate part of the U-statistic. Under appropriate regularity conditions, the degenerate part can be seen to vanish asymptotically, and hence the asymptotic variance of $G_n$ is determined by that of $\frac{1}{\sqrt{n}}\sum_{i=1}^{n}g_1(X_i)$, which, by a central limit theorem for dependent sequences, can be seen to be 
$\sum_{i=-\infty}^{\infty} E \left\{g_1(X_0)g_1(X_h)\right\}$. The fact that $\frac{1}{\sqrt{n}}\sum_{i=1}^{n}g_1(X_i)$ appears twice in the Hoeffding decomposition explains the factor 4 in (\ref{eq:lrv.gmd}) and (\ref{eq:lrv.qn}).  The asymptotic variance of $G_n$ carries over to the ``change-point process'' $\left\{ k n^{-1/2}(G_k - G_n)\right\}_{2\le k \le n}$. A detailed proof for general U-statistics under the dependence scenario considered here is given by \citet{DVWW14}.

The estimator $Q_n^\alpha$ is a U-quantile, i.e., instead of taking the first sample moment of $|X_i-X_j|$, $1 \le i < j \le n$, we consider the sample $\alpha$-quantile of these $\binom{n}{2}$ values. An essential step in the asymptotic analysis of U-quantiles 
 is to relate the empirical quantile function $U_n^{-1}$ of $|X_i-X_j|$, $1 \le i < j \le n$, to the corresponding empirical distribution function $U_n$ by means of a generalized Bahadur representation:
\[
	Q_n^\alpha = U_n^{-1}(\alpha) = Q^\alpha  + \frac{\alpha - U_n(Q^\alpha)}{u(Q^\alpha)} + R_n, 
\]
where $Q^\alpha$ and $u$ are as in Section \ref{app_supp_1}, and $R_n$ is remainder term, which, under appropriate regularity conditions, converges to zero sufficiently fast. Then, recalling that
\[
	U_n(t) = \binom{n}{2}^{-1} \sum_{1\le i < j \le n} \Ind{|X_i-X_j|\le t}, \qquad t \in \R, 
\]
yields
\[
	\sqrt{n} (Q_n^\alpha - Q_\alpha) =  \frac{\sqrt{n}}{u(Q^\alpha)} 
	\, \binom{n}{2}^{-1}\!\! \sum_{1\le i < j \le n} \left(\alpha  - \Ind{|X_i-X_j|\le Q^\alpha} \right)
	\, + \, \sqrt{n} R_n, 
\]
This explains on the one hand the appearance of the density $u(x)$ in the long-run variance (\ref{eq:lrv.qn}) 
and on the other hand relates the U-quantile $Q^\alpha$ to a U-statistic with kernel $h(x,y) = \Ind{|X_i-X_j|\le Q^\alpha}$, which is then further treated by the Hoeffding decomposition as described above. The function $\psi(x)$ in (\ref{eq:lrv.qn}) can be seen to be the linear kernel $g_1(x)$ associated with the kernel $h(x,y)$. 

\begin{proof}[Proof of Theorem \ref{th:gmd}] The theorem is a corollary of Corollary 2.8 of \citet{DVWW14}. There are four conditions imposed there, labeled Assumptions 2.2, 2.3, 2.4 and 2.6. Assumption 2.2 and 2.3 in \citet{DVWW14} are essentially our Assumption \ref{ass:gmd} (a) and (b). \citet[][Assumption 2.3]{DVWW14} is a condition on the moments on the kernel $g(X_i,X_j)$, which in our case $g(x,y) = |x-y|$ directly translate into moment condition on the data itself. Assumption 2.6 concerns the long-run variance and is identical to our Assumption 3.1. Thus it remains 
\citet[][Assumption 2.4]{DVWW14}, which is as follows:
\begin{assumption} \label{ass:variation}
There are positive constants $L,\epsilon_0>0$ such that for all $\epsilon\in (0,\epsilon_0)$
\[ 
E\left(\sup_{|x-X|\leq \epsilon, |y-Y|\leq \epsilon}|g(x,y)-g(X,Y)|\right)^2\leq L\epsilon,
\] 
where $X,Y$ are independent with the same distribution as $X_0$.
\end{assumption}
This conditions is also known as the variation condition. It can be understood as a form of Lipschitz continuity of the kernel $g$ with respect to $F$. Since in our case $g(x,y) = |x-y|$, $g$ itself is Lipschitz continuous, this condition is fulfilled for any distribution $F$.
\end{proof}

\begin{proof}[Proof of Theorem \ref{th:qn}]
This is a corollary of Corollary 2.5 (B) of \citet{VW14}. This statement required six conditions, which are labeled Assumptions 1 to 6. Assumptions 1, 4, and 5 in \citet{VW14} coincide with our Assumptions 3.4, 3.2, and 3.1, respectively. They concern the serial dependence, the kernel density estimation, and the HAC kernel estimation, respectively. Furthermore, \citet[][Assumption 6]{VW14} is Assumption \ref{ass:variation} above, which is fulfilled since the kernel $g(x,y)=|x-y|$ is Lipschitz continuous. Assumption 2 in \citet{VW14} is a similar variation condition for the kernel $h_t(x,y) = \Ind{|x-y|\le t}$, which is required to hold not only for $t = Q^\alpha$, but, slightly stronger, for all $t$ in some neighborhood of $Q^\alpha$. The boundedness of $f$ (our Assumption \ref{ass:smoothness.F}(a)) is sufficient for this. Thus it remains to show Assumption 3 of \citet{VW14}, which is the following smoothness condition on the distribution function $U(t) = P(|X-Y|\le t)$, $t \ge 0$, with $X, Y \sim F$ independent. 
\begin{assumption} \label{ass:smoothness.U}
Let $U$ be differentiable in a neighborhood of $Q_\alpha = U^{-1}(\alpha)$ with $u(t) = U'(t)$. Furthermore,  
\begin{enumerate}[(a)]
\item there are constants $c, C > 0$ such that $c \le u(t) \le C$ in a neighborhood of $Q_\alpha$ and
\item $\displaystyle \left\vert \frac{U(t)-\alpha}{t-Q_\alpha} - u(Q_\alpha)   \right\vert = O\left(|t-Q_\alpha|^{1/2}\right)$ \ for \ $t \to Q_\alpha$.
\een
\end{assumption}
It remains to show that this condition is implied by Assumption \ref{ass:smoothness.F}. 
Recall the representations (\ref{eq:U}) and (\ref{eq:u}) for $U$ and the corresponding density $u$, respectively.
Assumptions \ref{ass:smoothness.F} (a) and (b), i.e., $f$ is bounded and its support ``has no gaps'', ensure that $u(t)$ stays away from 0 and $\infty$ at any point strictly between $\inf\{x | u(x) > 0 \}$ and $\sup \{ x| u(x) > 0 \}$, i.e., Assumption \ref{ass:smoothness.U} (a). 
From Assumptions \ref{ass:smoothness.F} (a) and (c), we find that 
	$|u(x) - u(y)| \le 2 m ||f||^2_\infty |x - y|$,
where $m$ denotes the number of intervals. Hence $u$ is Lipschitz continuous. Furthermore, for $x,y$ close enough such that $u$ is monotonic between both (and without loss of generality we assume $u(x) \le u(y)$), we have 
	$u(x) \le \left\vert (U(y)-U(x))/(y-x)\right\vert \le u(y)$
and hence 
\[
	\left\vert \frac{U(y)-U(x)}{y-x} - u(x) \right\vert = O\left(|u(x)-u(y)|\right) = O(|x-y|) \quad \mbox{ as } x \to y
\]
and hence \ref{ass:smoothness.U}(b) holds. This completes the proof.
\end{proof}

\color{black}

\bibliographystyle{abbrvnat}
\footnotesize

\end{document}